\newtheorem{Theorem}{Theorem}[section]
\newtheorem{Lemma}[Theorem]{Lemma}
\newtheorem{Remark}[Theorem]{Remark}
\numberwithin{equation}{section}
\newcommand{\eee}{{\rm e}}
\newcommand{\ddd}{{\rm d}}
\begin{document}

\setcounter{page}{0} \renewcommand{\thefootnote}{\fnsymbol{footnote}} %
\thispagestyle{empty} \centerline{ }

\vskip 0.1 true cm
\begin{center}
{\Large \textsf{A Unifying Theory of Aging and Mortality}}

\vskip 1 true cm \noindent

{\sc Valentin Flietner}\textsuperscript{1}

{\sc Bernd Heidergott}\textsuperscript{2}

{\sc Frank den Hollander}\textsuperscript{3}

{\sc Ines Lindner}\textsuperscript{4}

{\sc Azadeh Parvaneh}\textsuperscript{5}

{\sc Holger Strulik}\textsuperscript{6}

\vskip 0.5 true cm

\parbox{15cm}{\singlespacing \footnotesize \textsuperscript{1} PwC, Bernhard-Wicki-Strasse 8, 80636 Munich, Germany; \email{valentin@flietner.com}

\textsuperscript{2} VU University Amsterdam, Department of Operations Analytics, De Boelelaan 1105, 1081 HV Amsterdam, The Netherlands; \email{bheidergott@feweb.vu.nl}

\textsuperscript{3} Leiden University, Mathematical Institute, Einsteinweg 55, 2333 CC Leiden, The Netherlands; \email{denholla@math.leidenuniv.nl}

\textsuperscript{4} VU University Amsterdam, Department of Economics, De Boelelaan 1105, 1081 HV Amsterdam, The Netherlands; \email{i.d.lindner@vu.nl}

\textsuperscript{5} Leiden University, Mathematical Institute, Einsteinweg 55, 2333 CC Leiden, The Netherlands; \email{s.a.parvaneh.ziabari@math.leidenuniv.nl}

\textsuperscript{6} University of G{\"ottingen}, Department of Economics, Platz der G{\"o}ttinger Sieben 3, 37073 G{\"ottingen}, Germany; Correspondence to holger.strulik@wiwi.uni-goettingen.de.}

\vskip 1 cm
\noindent
\begin{minipage}{15cm}
\textbf{Abstract}. In this paper, we advance the network theory of aging and mortality by developing a causal mathematical model for the mortality rate. First, we show that in large networks, where health deficits accumulate at nodes representing health indicators, the modeling of network evolution with Poisson processes is universal and can be derived from fundamental principles. Second, with the help of two simplifying approximations, which we refer to as mean-field assumption and homogeneity assumption, we provide an analytical derivation of Gompertz law under generic and biologically relevant conditions. We identify the parameters in Gompertz law as a function of the parameters driving the evolution of the network, and illustrate our computations with simulations and analytic approximations.

\end{minipage}

\end{center}
\renewcommand{\thefootnote}{\arabic{footnote}} \setcounter{footnote}{0} %
\setcounter{page}{0}

\clearpage


\section{Background and challenges}


\subsection{Motivation}

Why do we die when we get old? Traditionally, it is understood that we eventually die of `old age'. In scientific terms, the mortality rate $m(t)$ at age $t$ describes the probability that a person dies in a short age interval following age $t$, namely, for $0 < \Delta \ll 1$ the probability of death during the time interval $[t,t+\Delta)$ given that no death occurred during the time interval $[0,t]$ equals $m(t)\Delta$. The seminal paper by Mitnitski, Rutenberg, Farrell and Rockwood \cite{MRFR2017} led to a paradigm shift in the interpretation of the age-dependency of the mortality rate (see also \cite{Farrell2016, Rutenberg2018, Taneja2016}). The mortality rate is modeled by a \emph{dynamic network}, in which nodes represent health indicators, the dynamic relationship between the states of the nodes models the interdependency of the health indicators, and death is defined as the time it takes to reach a network state in which two carefully selected nodes, called \emph{mortality nodes}, are both damaged. The model showed a good fit to actual mortality rate data and offered a new interpretation of death as the accumulation of damaged health indicators. However, the existing network theory of aging and mortality depends on identifying an appropriate network structure through numerical experiments in a trial-and-error process.

In this study, we move from simulation-based modeling to a robust and foundational framework and derive a causal mathematical model for the mortality rate under generic conditions. Our holy grail is to provide a mathematical derivation of \emph{Gompertz law} for $m(t)$, which is believed to be valid when $t$ is neither too small nor too large. Before we state the mathematical model and identify our research question, we discuss in more depth the interpretation of the concept of `node' and `state of a node'. Using results from network science, we explain why a certain network structure and a certain network dynamics can be argued for in the model.

The results of our study advance the theoretical understanding of the age-dependency of the mortality rate. Our analytical approach relies solely on very general structural assumptions about the network and supports the network choice made in \cite{MRFR2017}. This means that results that were previously derived from an ad-hoc fitting of a particular network structure are here derived as closed-form analytical solutions under broad and biologically relevant conditions.

\subsection{Gompertz law}

Human life can be roughly divided into two periods: a {\it phase of initial development} (ranging from birth to puberty), during which the mortality rate decreases, followed by a {\it phase of aging} (after puberty), during which the mortality rate increases. Figure \ref{fig:USmen} illustrates these two phases for US men in the period 2010-2019. It indicates that the phase of aging runs from $10$ to $100$ years of age, and that from $40$ years of age onwards the mortality rate and age are log-linearly related. This empirical relationship, first observed and stated in 1825 by actuary Benjamin Gompertz \cite{G1825}, is given by ($\approx$ means approximately)
\begin{equation}
m(t) \approx \alpha\, \eee^{\beta t}, \qquad t \geq 0,
\label{Gomp0}
\end{equation}
with parameters $\alpha,\beta>0$, and is since referred to as \emph{Gompertz law}.

\begin{figure}[htbp]
\centering
\vspace{-1cm}
\includegraphics[width=0.5\textwidth, angle =-90]{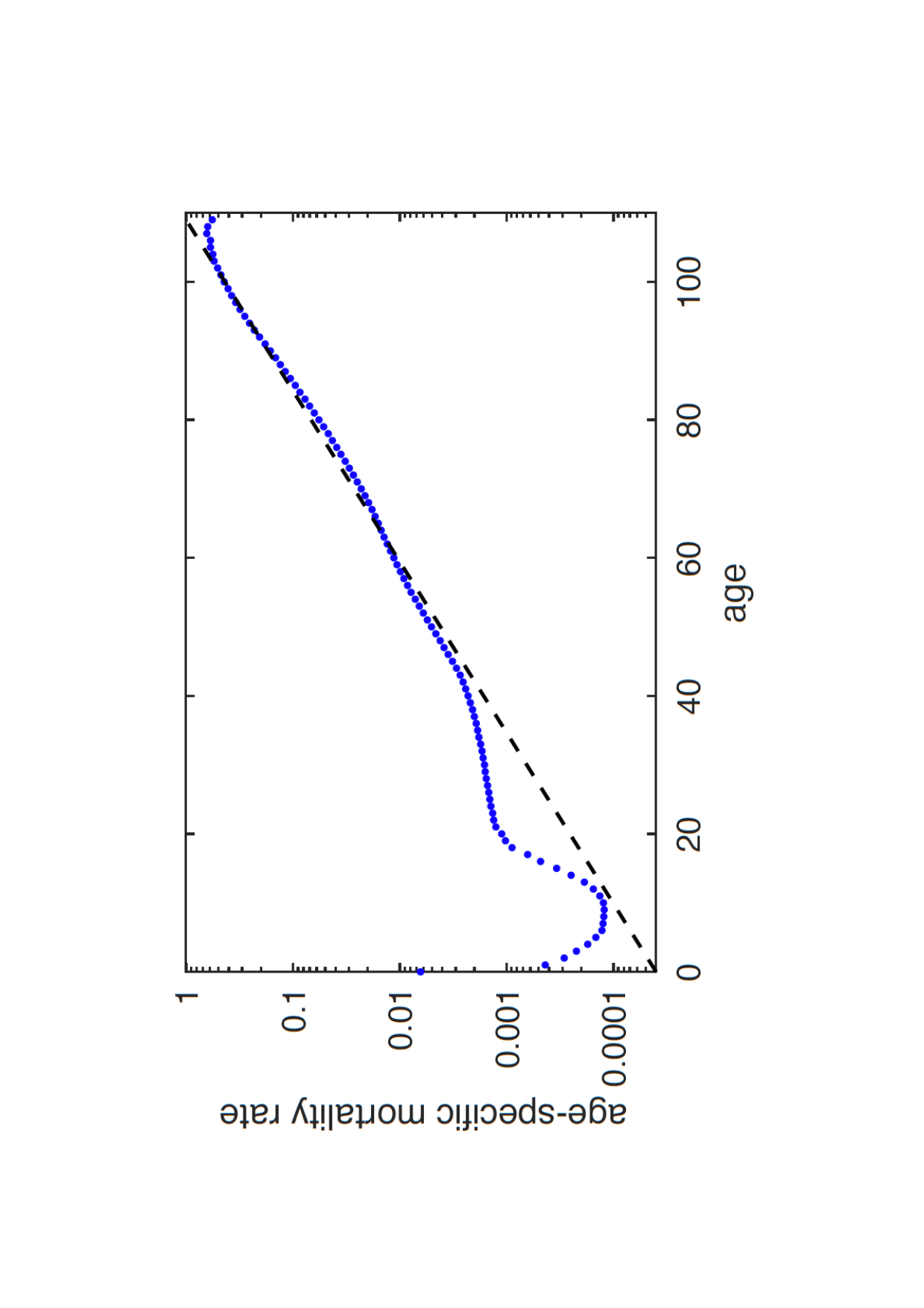}
\vspace{-1.5cm}
\caption{\small Age-specific mortality rate for US men in the period 2010-2019. Dots indicate data points taken from \cite{HMD2024}, \cite{HMDalt}. The straight line shows the Gompertz estimate $m(\text{\rm age}) = \alpha\,\eee^{\beta \times \text{\rm age}}$ with $\alpha \approx 5.8 \times 10^{-5}$ and $\beta \approx 8.5 \times 10^{-2}$, and with age measured in years. The plot is semi-logarithmic. (The data beyond 100 years may be less reliable because of measurement errors and selection effects.)}
\label{fig:USmen}
\end{figure}


Figure \ref{fig:USmen} illustrates Gompertz law by showing it together with the mortality rate of US men in the period 2010-2019. The figure indicates two \emph{overshooting phases} relative to the Gompertz law, one between birth and $10$ years of age and one between $10$ and $40$ years of age. Newborns are more likely to die of diseases that are far from fatal for grown-ups, such as gastroenteritis. An explanation is offered by health deficit models due to a lack of redundancy, such as organ reserves \cite{DHS2021, SY2001}. The lowest mortality is reached in puberty, when the body is fully developed and aging starts to take over from growth. From a mere physiological point of view, there is no explanation for the second overshooting phase, after puberty. Rather, this peak likely stems from other influences, such as risk preference, suicide, or war \cite{S2021}. Note that \eqref{Gomp0} shows no barrier for increasing age, and hence refutes the frequently made assumption of a bounded life time. The fact that the sample size of all people who ever lived on earth increases continuously simply implies that the maximum ever-observed life length will keep on rising as time proceeds \cite{FP1996}.

The Gompertz parameters differ across sexes and across countries. On average across countries, women face a lower $\alpha$ and a higher $\beta$ \cite{GG1991}, i.e., women have an initial advantage of a lower rate of aging that men eventually catch up with when getting older. Economically advanced countries are characterised by a lower $\alpha$ and a higher $\beta$, i.e., a lower initial mortality rate and a faster speed of aging \cite{SV2013}. A lower $\alpha$ can be associated with better (initial) physiological conditions, such as nutrition. The time invariance of the Gompertz law suggests that, correcting for country-specific and sex-specific background risk, humans share a common mechanism of aging, a common stochastic process according to which individual bodies lose function over time and bodily failures and health deficits accumulate.

\subsection{Features of health in an aging network}

Medicine as a discipline is almost as old as mankind itself. Even though over the centuries medicine has undergone major paradigm shifts, a body of knowledge has emerged on how to assess the health status of a person, based on a few key health indicators such as blood pressure, sugar and cholesterol levels, heart rate, physical mobility, etc. With the advent of modern technology, more and more health indicators have become available. While such health indicators increase the knowledge on the health state, their added value is limited when measurements become more and more specific. Physicians need a small number of measurements to act on and, through centuries of medical practice, a few \emph{central measurements} have emerged. The fact that these measurements represent the health state of the body fairly well indicates their predictive power.

Modelling the health indicators as \emph{nodes} in a graph and their interconnectedness as \emph{links} in the graph motivates us to assume that the network is \emph{scale-free}, i.e., its empirical degree distribution approximately follows a power law. Furthermore, the \emph{hubs} of the networks should be separated from each other, so that they can serve as a proxy of the network state in the local clusters that they dominate. This makes it reasonable to suppose that the network is \emph{disassortative}, i.e., its degrees tend to be negatively correlated. We distinguish between \emph{mortality nodes} and \emph{aging nodes}. The former (small in number) are the health indicators that play a dominant role in the cause of death, while the latter (large in number) are the health indicators that form the overall network structure. Each node can be in a \emph{healthy state} or a \emph{damaged state}.

The model allows for a more speculative interpretation by taking the graph as a representation of the interconnectedness of physical parts of the body. In this interpretation, the mortality nodes represent the most vital organs, and the graph structure reflects the fact that certain physiological aspects of the body are more related to certain vital organs than others. Dissortativeness comes into play as a way of separating the different vital hubs as much as possible, so that if one hub is in a state of bad health, then its illness is contained as much as possible. The similarity with security structures of computer networks, or protocols for containing spread of diseases in infection models, is obvious.

The following assumptions (\emph{A}) and questions (\emph{Q}) serve as guidelines:
\begin{itemize}
\item
\emph{A:} The aging nodes and the mortality nodes serve as predictors of the overall state of the network.\\
\emph{Q:} Can we show that when the mortality nodes reach the damaged state, a certain fraction of the aging nodes is damaged as well. Can we estimate this fraction?
\item
\emph{A:} Scale-freeness and disassortativity are essential and rely on a hierarchy of health indicators.\\
\emph{Q:} To what extent is the presence of hubs crucial for predictability? If we decrease the out-degree of the hubs, then does the predictability of these hubs decrease?
\end{itemize}
The goal of the present paper is to propose a \emph{causal} model that allows for a \emph{data science approach} to mortality. In Section~\ref{sec:basic} we define the basic model. In Section~\ref{sec:analysis} we present our analysis of this model. In Section~\ref{sec:simulations} we illustrate the results with simulations and in Section 5 we provide an analytical approximation of Gompertz law. Section 6 concludes.


\section{Basic model}
\label{sec:basic}


\subsection{Aging-network process}

We consider a graph $G$ consisting of $n$ nodes and certain links between these nodes, such that for each node $i$ the set $N(i)$ of neighbours of $i$ is non-empty. Each node has a state that is either $0$ (= healthy) or $1$ (= damaged). A node represents a data measurement of the body. It is conceivable that there are limitless measurement possibilities for the body and that these have some causal interference structure. Since this structure is not known, we replace it by a graph. In Figure~\ref{fig:network} this graph is depicted, where the black nodes represent the aging nodes and the red nodes the mortality nodes.

\begin{figure}[htbp]
\centering
\vspace{-0.3cm}
\includegraphics[width=0.4\textwidth]{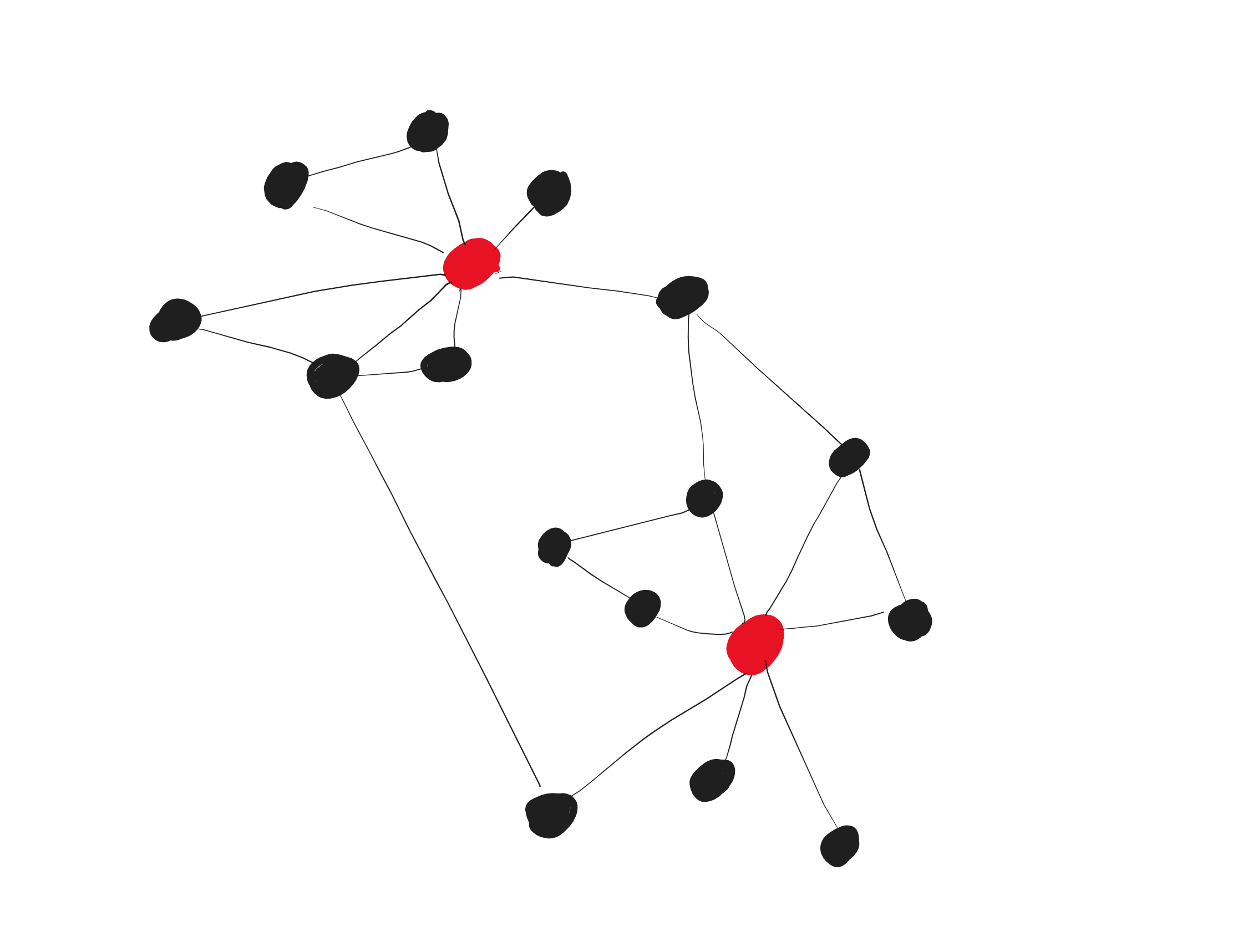}
\caption{\small An example of a health network. The aging nodes are black, the mortality nodes are red.}
\label{fig:network}
\end{figure}

We consider a Markov process ${\bf Z}=({\bf Z}(t))_{t \geq 0}$, called the \emph{aging process} on $G$, with
\[
{\bf Z}(t) = (Z_1(t ),\ldots,Z_n(t))
\]
the network state at time $t$, where $Z_i(t) \in \{0,1\}$ is the state of node $i$ at time $ t$. The evolution of the nodes is as follows. At time $t$, node $i$ goes from the healthy state to the damaged state, and vice versa, at rates, respectively,
\begin{equation}
\Gamma_+ (i,t) = A_+(f_i(t)),  \qquad \Gamma_- (i,t) = A_-(f_i(t)),
\label{expdamheal}
\end{equation}
where
\begin{equation}
\label{eq:frailty}
f_i (t) = \frac{1}{|N(i)|} \sum_{j \in N (i)} 1_{Z_j(t) = 1},
\end{equation}
is the fraction of damaged neighbours of node $i$ at time $t$, and
\begin{equation}
\label{eq:Apmdef}
A_+(f) = \Gamma_0\,\eee^{r_+ f},  \qquad A_-(f) = \frac{\Gamma_0}{R} \,\eee^{-r_- f}, \qquad f \in [0,1],
\end{equation}
are two functions that play a key role throughout the sequel. Here, $r_+,r_-,\Gamma_0,R \in (0,\infty)$ play the role of \emph{tuning parameters}. The exponential forms in \eqref{expdamheal} will be motivated in Section~\ref{sec:netmot}. Think of $\Gamma_0$ as the evolution rate for the network as a whole, and of $R$ as tuning an asymmetry between the healthy state and the damaged state. Note that ${\bf Z}$ has \emph{time-dependent} transition rates, i.e., it is a \emph{time-inhomogeneous Markov process}.

We label \emph{mortality nodes} by $1,2$, the \emph{aging nodes} by $3,\dots,n$, and define
\[
\begin{aligned}
S_{\rm{predeath}} &= \big\{z \in \{0,1\}^n\colon\,(z_1,z_2) \in \{(1,0),(0,1)\}\big\},\\
S_{\rm{death}} &= \big\{z \in \{0,1 \}^n\colon\,(z_1,z_2)=(1,1)\big\},\\
S_{\rm{other}} & =  \{0,1 \}^n  \setminus ( S_{\rm{predeath}} \cup S_{\rm{death}} )
=  \big\{z \in \{0,1\}^n\colon\,(z_1,z_2) =  (0,0) \big\} .
\end{aligned}
\]
Initially, all nodes are healthy, i.e.,
\[
{\bf Z}(0) = (0,\ldots,0),
\]
and the aging process ${\bf Z}$ can move into and out of \emph{every} state. The life time of the individual terminates when ${\bf Z}$ enters $S_{\rm{death}}$, i.e., at time
\begin{equation}
\label{eq:deathtime}
\tau = \inf\{t \geq 0\colon\,{\bf Z}(t) \in S_{\rm{death}}\}.
\end{equation}


\subsection{Mortality rate}

We are interested in the \emph{mortality rate} at time $t$ given by
\begin{equation}
\label{eq:mortality}
m(t) = \lim_{\Delta \downarrow 0} \frac{1}{\Delta} \mathbb{P}(\tau \leq t+\Delta \mid  \tau \geq t)
= - \frac{1}{s(t)} \frac{\ddd}{\ddd t} s(t),
\end{equation}
where
\begin{equation}
\label{eq:survival}
s(t) = \mathbb{P}(\tau>t)
\end{equation}
is the probability to survive up to time $t$. It is easily checked that Gompertz law implies
\[
s(t) = C\, \eee^{-\frac{\alpha}{\beta} \eee^{\beta t}},
\]
with $C$ an integration constant. In particular, the density of the lifetime distribution equals
\[
\frac{\ddd}{\ddd t}(1-s(t)) = C\, \alpha \, \eee^{\beta t - \frac{\alpha}{\beta} \eee^{ \beta t}}.
\]

Given that ${\bf Z}(t) = z \in S_{\rm{predeath}}$ with $(z_1,z_2)=(1,0)$, the probability that within time $0 < \Delta \ll 1$ the transition to $S_{\rm{death}}$ is made before any of the aging nodes changes equals $F(z)\Delta + o(\Delta)$ with
\[
F(z) = \frac{A_+(\hat{f}_2(z))}{\sum_{i=1}^n 1_{z_i=0}\, A_+(\hat{f}_i(z)) + \sum_{i=1}^n 1_{z_i=1}\, A_-(\hat{f}_i(z))},
\]
where
\begin{equation*}
\hat{f}_i (z) = \frac{1}{|N(i)|} \sum_{j \in N (i)} 1_{z_j = 1}.
\end{equation*}
A similar formula holds when $(z_1,z_2) = (0,1)$ with in the numerator $A_+(\hat{f}_1(z))$. The mortality rate at time $t$ therefore equals
\[
m(t) = \mathbb{E}\big [F({\bf Z}(t))\,1_{{\bf Z}(t) \in S_{\rm{predeath}}} \mid \tau > t \big] .
\]
Since this quantity depends on the structure of the health network and the parameters $r_+,r_-,\Gamma_0,R$, it is a challenging task to find out how it depends on $t$.


\subsection{Summary}

The above model has two key ingredients:
\begin{itemize}
\item \emph{Choice of network structure:}
\begin{itemize}
\item Aging graph $G$, scale-free and disassortative.
\item Number of nodes $n$, degrees of the nodes $N(i)$, $i = 1,\ldots,n$.
\end{itemize}
\item \emph{Choice of network dynamics:}
\begin{itemize}
\item Aging dynamics ${\bf Z} = ({\bf Z}(t))_{t \geq 0}$, time-inhomogeneous Markovian.
\item Overall rate $\Gamma_0$ and adjustment factor $R$.
\item Damage and recovery rates $r_+$, $r_-$.
\end{itemize}
\end{itemize}


\section{Mathematical analysis}
\label{sec:analysis}


\subsection{Underlying network and Poissonisation}
\label{sec:netmot}

Below $G$ lies a more complex and unobservable network, controlling the health of the individual. We view $G$ as the observable resultant of this network at the level of health indicators, and ${\bf Z}$ as the resultant of the dynamics through this network. In Figure~\ref{fig:network_expanded} the additional nodes of the more complex and unobservable network are depicted in grey (representing the `microscopic level'). These additional nodes act in clusters on each single black node (representing the `mesoscopic level').

\begin{figure}[htbp]
\centering
\vspace{-0.2cm}
\includegraphics[width=0.4\textwidth]{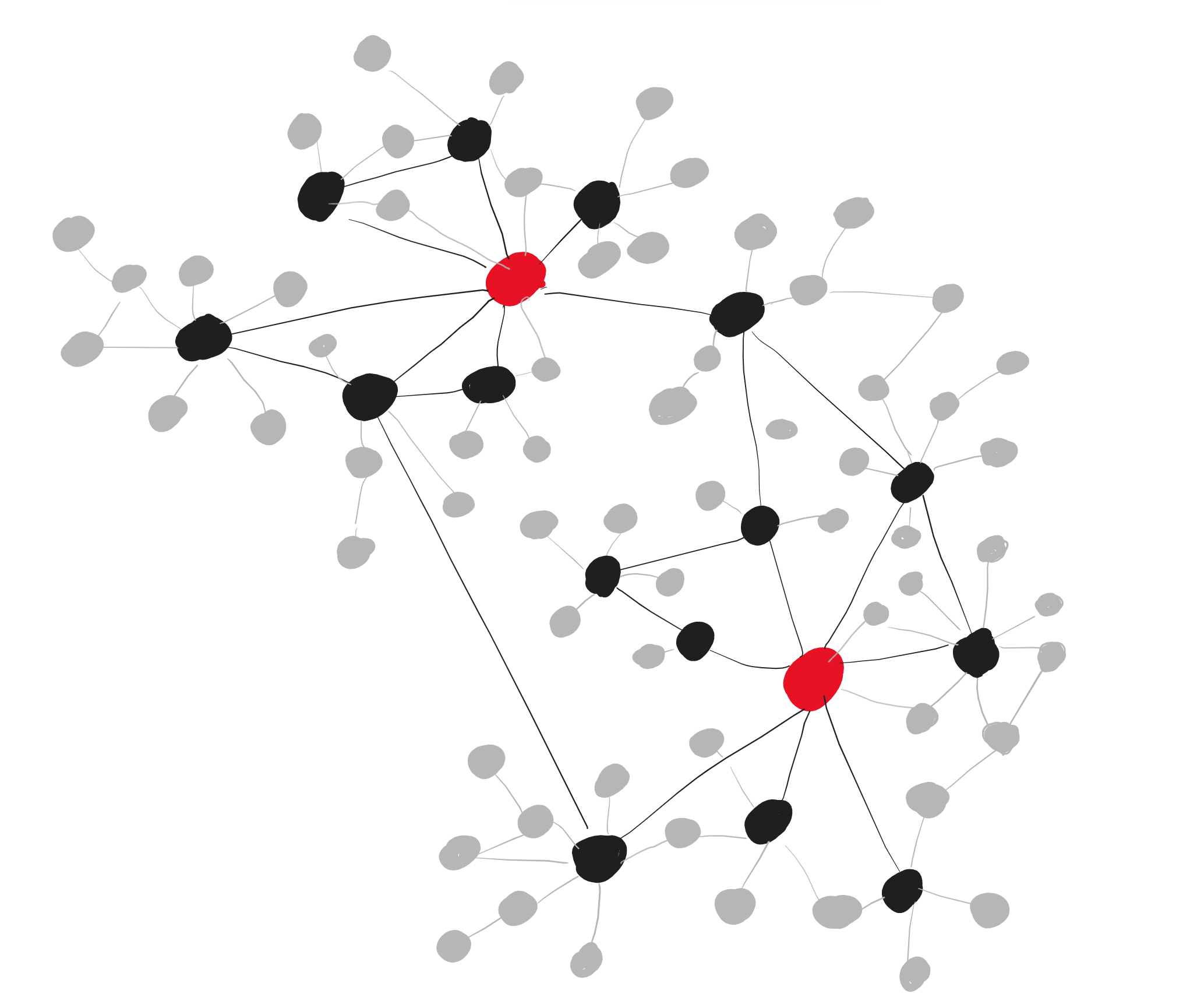}
\caption{\small An example of an extended health network (compare with Figure~\ref{fig:network}).}
\label{fig:network_expanded}
\end{figure}

We next argue why this choice of model is \emph{not ad hoc}, but rather provides a \emph{causal} approach to mortality, capturing both robustness and universality. In particular, we provide an explanation for why deficits and repairs accumulate at \emph{exponential rates} as in \eqref{expdamheal}. We show that this situation arises naturally when we interpret the state of an aging node or a mortality node (the observable black and red nodes) as representing the accumulated state of a large number of subnodes (the unobservable grey nodes). In other words, each node sees a \emph{superposition} of processes at the subnodes.

The following theorem, taken from \cite[Theorem (3.10)]{C1972}, says that \emph{any} superposition of a large number of \emph{sparse} point processes on $\mathbb{R}$ is close to a Poisson point process on $\mathbb{R}$. It does not need \emph{any} assumptions on the distribution of the constituent point process other than that these are \emph{thin} and \emph{plenty}.

\begin{Theorem}{\bf [Poisson]}
\label{thm:Poisson}
For $n \in \mathbb{N}$, let $(M_k)_{k=1}^n$ be a sequence of independent point processes on $\mathbb{R}$ such that, for some measure $\mu$ on $\mathbb{R}$ and all finite intervals $I \subset \mathbb{R}$,
\begin{itemize}
\item[(1)]
$\lim_{n\to\infty} \sum_{k=1}^n \mathbb{P}(M_k(I) = 1) = \mu(I)$,
\item[(2)]
$\lim_{n\to\infty} \sup_{1 \leq k \leq n} \mathbb{P}(M_k(I) \geq 1) = 0$,
\item[(3)]
$\lim_{n\to\infty} \sum_{k=1}^n \mathbb{P}(M_k(I) \geq 2) = 0$,
\end{itemize}
where $M_k(I)$ denotes the number of points that $M_k$ puts inside $I$. Then $M^n = \cup_{k=1}^n M_k$ converges weakly as $n\to\infty$ to the Poisson point process on $\mathbb{R}$ with intensity measure $\mu$.
\end{Theorem}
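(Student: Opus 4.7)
My plan is to reduce the weak convergence of point processes to convergence of finite-dimensional distributions, and then to the classical Poisson law of rare events for triangular arrays of independent Bernoulli variables. Recall that weak convergence of simple point processes on $\mathbb{R}$ (in the vague topology on locally finite counting measures) is characterised by convergence of the joint laws $(M^n(I_1),\ldots,M^n(I_m)) \Rightarrow (N(I_1),\ldots,N(I_m))$ for any finite collection of disjoint bounded intervals $I_1,\ldots,I_m$ whose endpoints are continuity points of $\mu$ (Kallenberg's theorem for point-process convergence). So it suffices to identify these joint limits as independent Poisson variables with parameters $\mu(I_1),\ldots,\mu(I_m)$.

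\textbf{Step 1: reduction to indicator sums.} Fix a bounded interval $I$ and set $X_k^{(n)} = 1_{\{M_k(I) \geq 1\}}$, a Bernoulli variable with parameter $p_k^{(n)} = \mathbb{P}(M_k(I) \geq 1)$. Since the $M_k$ are independent, so are the $X_k^{(n)}$. The event $\{M^n(I) \neq \sum_{k=1}^n X_k^{(n)}\}$ is contained in $\bigcup_k \{M_k(I) \geq 2\}$, and a union bound combined with hypothesis (3) gives $\mathbb{P}(M^n(I) \neq \sum_k X_k^{(n)}) \leq \sum_k \mathbb{P}(M_k(I) \geq 2) \to 0$. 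Hence $M^n(I)$ and $\sum_k X_k^{(n)}$ have the same weak limit, if one exists.

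\textbf{Step 2: Poisson law of rare events.} I would verify the hypotheses of the Poisson convergence theorem for sums of independent Bernoullis: from (2), $\max_k p_k^{(n)} \to 0$, and from (1) together with the identity $\mathbb{P}(M_k(I) = 1) = p_k^{(n)} - \mathbb{P}(M_k(I) \geq 2)$ and (3), I obtain $\sum_k p_k^{(n)} \to \mu(I)$. The classical Le Cam-type bound
\[
d_{\mathrm{TV}}\!\left(\sum_{k=1}^n X_k^{(n)},\ \mathrm{Poisson}(\lambda_n)\right) \leq \sum_{k=1}^n (p_k^{(n)})^2 \leq \max_k p_k^{(n)} \cdot \sum_k p_k^{(n)}
\]
with $\lambda_n = \sum_k p_k^{(n)}$, tends to $0$, so $\sum_k X_k^{(n)} \Rightarrow \mathrm{Poisson}(\mu(I))$, and therefore $M^n(I) \Rightarrow \mathrm{Poisson}(\mu(I))$.

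\textbf{Step 3: joint convergence for disjoint intervals.} For disjoint $I_1,\dots,I_m$, set $I = \bigcup_j I_j$ and, for each $k$, introduce the vector of indicators $Y_k^{(n)} = (1_{\{M_k(I_j) \geq 1\}})_{j=1}^m$. Because $I$ is a finite interval, hypothesis (3) applied to $I$ implies that, with probability tending to $1$ uniformly over $k$, each $M_k$ contributes at most one point in $I$, so that at most one coordinate of $Y_k^{(n)}$ equals $1$. Arguing as in Step 1, the joint law of $(M^n(I_1),\dots,M^n(I_m))$ is asymptotically equivalent to $(\sum_k Y_{k,1}^{(n)},\dots,\sum_k Y_{k,m}^{(n)})$, which has a multinomial-per-trial structure with cell probabilities $q_{k,j}^{(n)} = \mathbb{P}(M_k(I_j) \geq 1)$ satisfying $\sum_k q_{k,j}^{(n)} \to \mu(I_j)$ and $\max_{k,j} q_{k,j}^{(n)} \to 0$. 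The standard multinomial Poissonisation (compute the joint Laplace transform and use $\log(1-x) = -x + O(x^2)$) yields the limit of independent $\mathrm{Poisson}(\mu(I_j))$.

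\textbf{Main obstacle.} The technically delicate point is Step 3: keeping track of the dependence between $M_k(I_j)$ and $M_k(I_{j'})$ for a single $k$ and controlling the error in replacing counts by indicators \emph{jointly} across intervals. The replacement argument only works because (3) applies to any bounded interval, and in particular to the union $\bigcup_j I_j$, which is what kills the joint-dependence contribution. Once this is handled, the passage from finite-dimensional Poisson convergence to weak convergence of $M^n$ as a point process is a routine invocation of Kallenberg's theorem, provided one restricts to continuity intervals of $\mu$ (which is harmless since these form a separating class).
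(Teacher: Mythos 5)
The paper does not prove Theorem~\ref{thm:Poisson}: it is quoted directly from \c{C}inlar (1972), Theorem~(3.10), and used as an imported result, so there is no proof in the paper against which to compare yours. Your argument must be assessed on its own merits.

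On those merits, your proof is correct and is the standard modern route to this theorem (a version of Grigelionis' superposition theorem). Steps~1 and~2 are clean: you correctly use condition~(3) to swap counts for indicators at vanishing total-variation cost, combine conditions~(1) and~(3) to deduce $\sum_k p_k^{(n)} \to \mu(I)$, and invoke the Le~Cam bound together with condition~(2) to get the one-dimensional Poisson limit. Your Step~3 puts its finger on exactly the right point: applying condition~(3) to the union $\bigcup_j I_j$ of disjoint intervals is what kills the within-$k$ correlation between $M_k(I_j)$ and $M_k(I_{j'})$, after which multinomial Poissonisation (equivalently, a joint PGF expansion using $\log(1-x)=-x+O(x^2)$) yields the independent Poisson limits. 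Restricting to continuity intervals of $\mu$ is the correct precaution for the final Kallenberg-type step, and since conditions~(1)--(3) are assumed for \emph{all} finite intervals, this restriction is harmless.

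For context, \c{C}inlar's own argument, and most classical treatments, compute the probability generating or Laplace functional directly: independence gives $L_{M^n}(f)=\prod_k L_{M_k}(f)$, sparseness~(3) lets one write $L_{M_k}(f)=1-\mathbb{E}\bigl[\int(1-\eee^{-f})\,\ddd M_k\bigr]+o(\cdot)$, and conditions~(1)--(2) push the product to $\exp\bigl(-\int(1-\eee^{-f})\,\ddd\mu\bigr)$, the Laplace functional of the Poisson process. That route handles all intervals simultaneously and bypasses your Step~3, but the two arguments are the same computation in different guises. Your proof has no gap.
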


\noindent
Condition (1) guarantees that the superposition $M^n$ has an intensity measure $\mu_n$ that converges to a limiting intensity measure $\mu$. The interest is in the situation where $\mu$ is \emph{non-trivial}, i.e., is neither zero nor infinity. Condition (2) says that the superposition is \emph{infinitesimal}, i.e., each constituent point process makes a vanishing contribution. Condition (3) says that the superposition is \emph{uniformly sparse}, i.e., points do not cluster. There is no (!) condition on the nature of the constituent point processes, which may have strong dependencies and may be far from being Poisson themselves. It is only through the \emph{superposition} that the Poisson nature comes out in the limit, subject to the three conditions. For the special case where, for each $1 \leq k \leq n$, $M_k$ is a \emph{stationary renewal process} with \emph{mean interarrival time} $1/\lambda_k \in (0,\infty)$, condition (1) reads $\lim_{n\to\infty} \sum_{k=1}^n \lambda_k = \lambda \in (0,\infty)$, and the limit is the Poisson point process with intensity $\lambda$, i.e., exponentially distributed interarrival times with mean $1/\lambda$.

We use Theorem~\ref{thm:Poisson} with the role of space $\mathbb{R}$ taken over by time $[0,\infty)$. In that context the theorem says that the superposition of a large number of possibly time-dependent and mutually dependent clocks that ring rarely behaves like a single Poisson clock that rings at a possibly time-dependent rate.


\subsection{Mean-field and homogeneity assumptions}

Now that we have set up the model, one way to proceed would be to make simple choices for $G$, like a \emph{complete graph} with all the nodes connected, or a \emph{bipartite graph} with the mortality nodes and the aging nodes forming two communities in the network, or a \emph{scale-free graph} with the mortality nodes forming the hubs in the network and the aging nodes not being hubs. This is the approached followed by \cite{MRFR2017} and the related literature discussed in the Introduction. Here we follow a new, analytical approach, which requires only very generic structural assumptions on the network. Results are derived by approximate computations based on a \emph{mean-field assumption} and a \emph{homogeneity assumption}. These assumptions are plausible because of the presumed hub structure of the mortality nodes. Below, whenever we write $\approx$ we refer to an approximation implied either by the two assumptions or by some other simplification. It remains a \emph{mathematical challenge} to provide error bounds.


We approximate the dependence of a node on the fraction of damaged neighbours of that node, as expressed in \eqref{eq:frailty}, by the average of the nodes that are in state $1$. To that end, we put
\begin{equation}
\label{eq:frailtyl}
\hat{p}(t) = \frac{1}{n}  \sum_{j=1}^n 1_{Z_j(t) = 1}, \qquad p(t) = \mathbb{E}[\hat p(t)]
\end{equation}
and we refer to $ p ( t ) $ as \emph{damage fraction} at time $t$.
For $n$ large, the $Z_j(t)$ for nodes that are far apart are nearly independent, so that the average over the nodes acts like a \emph{law of large numbers}.

\medskip\noindent
$\blacktriangleright$ We make the following \emph{mean-field assumption} (compare with \eqref{expdamheal}--\eqref{eq:Apmdef}):

\medskip\noindent
{\sc Assumption (A1).}
{\rm The rates at time $t$ for node $i$ to go from healthy to damaged, respectively, from damaged to healthy are given by
\[
\Gamma_+ (i,t) \approx A_+(p(t)), \qquad \Gamma_- (i,t) \approx A_-(p(t)),
\]
with $p(t)$ the average fraction of damaged nodes at time $t$. Note that the approximate rates do not depend on $i$.} \hfill$\spadesuit$

\medskip\noindent
$\bullet$ We comment on {\sc Assumption (A1)}. If the number of nodes $n$ is large, and the neighbourhood $N(i)$ of node $i$ is large as well, then we can approximate (recall \eqref{eq:frailty})
\begin{equation}
\label{eq:conc}
f_i(t) = \frac{1}{|N(i)|} \sum_{j \in N (i)} 1_{Z_j(t) = 1} \approx \hat{p}(t) \approx p(t), \qquad  i = 1,2.
\end{equation}
Indeed, in a \emph{scale-free} and \emph{disassortative} network there are many nodes with high degree, while the neighbourhoods of many nodes are large as well yet cover only a small fraction of the nodes. The former ensures that a law of large numbers is in force, while the latter ensures that the nodes are more or less independent of each other. Combining \eqref{expdamheal} with \eqref{eq:conc}, we approximate
\begin{equation}
\label{eq:exp}
\begin{aligned}
\Gamma_+(i,t) &= A_+(f_i(t)) \approx \mathbb{E}[A_+(\hat{p}(t))] = \mathbb{E}\left[\Gamma_0\,\eee^{r_+ \hat{p}(t)}\right]
\approx \Gamma_0\,\eee^{r_+\mathbb{E}[\hat{p}(t)]} = A_+(p(t)),\\
\Gamma_-(i,t) &= A_-(f_i(t)) \approx \mathbb{E}[A_-(\hat{p}(t))] = \mathbb{E}\left[\frac{\Gamma_0}{R}\,\eee^{-r_- \hat{p}(t)}\right]
\approx \frac{\Gamma_0}{R}\,\eee^{-r_-\mathbb{E}[\hat{p}(t)]} = A_-(p(t)),
\end{aligned}
\end{equation}
where in the second approximation we bring the expectation to the exponent (later we return to this simplification).

\begin{Remark}
{\rm The health network introduced in \cite{MRFR2017} also includes so-called \emph{frailty nodes}: a small number of aging nodes that are highly connected. These frailty nodes allow for the modelling of the so-called \emph{frailty index} \cite{searle2008standard}, which is of interest in health sciences. However, due to the mean-field approximation put forward in \eqref{eq:conc}, according to which all nodes are treated in the same way, our approach does not allow for a closer analysis of the frailty nodes and the frailty index.}
\end{Remark}

\begin{Remark}
\label{rem:Jensen}
{\rm For short times, when few nodes are damaged, the law of large numbers that underlies the first approximation in \eqref{eq:exp} is less sharp than for large times, when many nodes are damaged. In the second approximation in \eqref{eq:exp} we brought the expectation to the exponent. However, by Jensen's inequality for expectations of convex functions, we actually have
\[
\mathbb{E}\left[\eee^{r_+ \hat{p}(t)}\right] \geq \eee^{r_+\mathbb{E}[\hat{p}(t)]},
\qquad \mathbb{E}\left[\eee^{-r_- \hat{p}(t)}\right] \geq \eee^{-r_-\mathbb{E}[\hat{p}(t)]},
\]
i.e., the second approximation \emph{undershoots} the two rates, in particular, the rate for a node to become damaged.}
\end{Remark}

\medskip\noindent
$\blacktriangleright$ We make the following \emph{homogeneity assumption}:

\medskip\noindent
{\sc Assumption (A2).}
{\rm The states in $S_{\rm{other}}$, $S_{\rm{predeath}}$, $S_{\rm{death}}$ are aggregated into three single states. The transition rates between these single states at time $t$ are as in Figure~\ref{fig:aggregated}.}\hfill$\spadesuit$

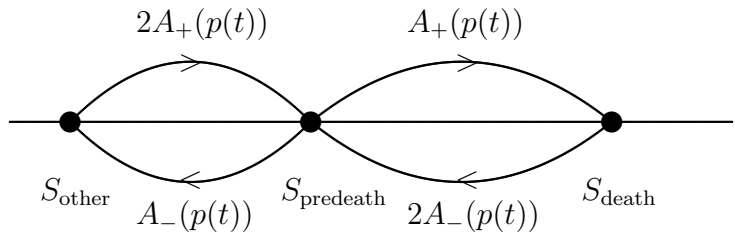
\begin{figure}[htbp]
\begin{center}
\setlength{\unitlength}{0.8cm}
\begin{picture}(10,4)(0,-2)
{\thicklines
\qbezier(0,0)(6,0)(12,0)
\qbezier(1,0)(3,2)(5,0)
\qbezier(5,0)(3,-2)(1,0)
\qbezier(5,0)(7.5,2)(10,0)
\qbezier(10,0)(7.5,-2)(5,0)
}
\put(0.5,-1.3){$S_{\rm {other}}$}
\put(4.5,-1.3){$S_{\rm {predeath}}$}
\put(9.5,-1.3){$S_{\rm {death}}$}

\put(1,0){\circle*{0.35}}
\put(5,0){\circle*{0.35}}
\put(10,0){\circle*{0.35}}

\put(2.1,1.5){$2A_+(p(t))$}
\put(2.1,-1.7){$A_-(p(t))$}
\put(6.6,1.5){$A_+(p(t))$}
\put(6.6,-1.7){$2A_-(p(t))$}

\put(2.8,-1.1){$<$}
\put(2.8,0.85){$>$}
\put(7.4,0.85){$>$}
\put(7.4,-1.1){$<$}

\end{picture}
\end{center}
\caption{\small The aggregated time-inhomogeneous Markov chain.}
\label{fig:aggregated}
\end{figure}

\medskip\noindent
$\bullet$ We comment on {\sc Assumption (A2)}. The transition from $S_{\rm{other}}$ to $S_{\rm{predeath}}$ occurs when both two mortality nodes are healthy and one of them switches to damaged. The transition from $S_{\rm{predeath}}$ to $S_{\rm{death}}$ occurs when one mortality node is damaged, the other mortality node is healthy and switches to damaged. For the reverse transitions a similar argument applies.

\medskip\noindent
$\blacktriangleright$
Combining the mean-field assumption and the homogeneity assumption, we obtain the following.

\begin{Lemma}
\label{lem:a3}
Under {\sc Assumptions (A1)--(A2)}, $p(t)$ is the solution of the autonomous differential equation
\begin{equation}
\label{eq:pdiff}
\frac{\ddd}{\ddd t}\, p(t) = (1-p(t))\,A_+(p(t)) - p(t)\,A_-(p(t)), \qquad p(0) = 0.
\end{equation}
\end{Lemma}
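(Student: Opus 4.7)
The plan is to derive \eqref{eq:pdiff} as the Kolmogorov forward equation for the marginal occupation probability of a single node, after Assumption (A1) has replaced the neighbour-dependent transition rates by the deterministic quantities $A_\pm(p(t))$, and then to use the resulting symmetry to collapse all nodal marginals onto the common function $p(t)$.

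First, for each node $j$ I set $p_j(t) = \mathbb{P}(Z_j(t) = 1)$. Applying the generator of ${\bf Z}$ to the indicator $z \mapsto 1_{z_j = 1}$ and taking expectations gives the exact forward equation
\[
\frac{\ddd}{\ddd t}\, p_j(t) = \mathbb{E}\bigl[1_{Z_j(t) = 0}\, A_+(f_j(t))\bigr] - \mathbb{E}\bigl[1_{Z_j(t) = 1}\, A_-(f_j(t))\bigr],
\]
since only the two transitions at node $j$, with rates $\Gamma_\pm(j,t) = A_\pm(f_j(t))$, can flip the value of $1_{Z_j = 1}$. Assumption (A1) then replaces $A_\pm(f_j(t))$ by the deterministic $A_\pm(p(t))$; pulling these constants out of the expectations yields
\[
\frac{\ddd}{\ddd t}\, p_j(t) \approx (1 - p_j(t))\, A_+(p(t)) - p_j(t)\, A_-(p(t)).
\]

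Second, I would close the system using $p(t) = \frac{1}{n}\sum_{j=1}^n p_j(t)$, which is immediate from \eqref{eq:frailtyl}. Under (A1) all the nodal ODEs have identical right-hand sides and the common initial condition $p_j(0) = 0$, inherited from ${\bf Z}(0) = (0,\ldots,0)$. Picard--Lindel\"of applied to the smooth field $p \mapsto (1-p)A_+(p) - p A_-(p)$ then forces $p_j(t) \equiv p(t)$ for every $j$. Substituting this identity back into the nodal ODE produces exactly \eqref{eq:pdiff} with $p(0)=0$.

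The main obstacle is conceptual rather than technical: the replacement of the random quantity $A_\pm(f_j(t))$ by $A_\pm(p(t))$ is precisely the mean-field step in \eqref{eq:exp}, which combines a concentration estimate for $f_j(t)$ around $\hat p(t)$ (heuristically justified by the scale-free and disassortative structure of $G$) with the exchange of expectation and exponential that incurs the Jensen gap flagged in Remark~\ref{rem:Jensen}. A rigorous identity in place of \eqref{eq:pdiff} would require quantitative control of both errors, which the paper explicitly defers. Note that Assumption (A2) is not used in the derivation of \eqref{eq:pdiff} itself; it enters only afterwards, when the solution $p(t)$ is fed into the aggregated three-state mortality chain of Figure~\ref{fig:aggregated}.
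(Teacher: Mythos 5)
Your proposal is correct and captures the same mean-field balance idea as the paper, but it is phrased at a finer level of granularity. The paper argues directly at the aggregate level: under (A1) there are on average $n(1-p(t))$ healthy nodes, each flipping to damaged at rate $A_+(p(t))$, and $np(t)$ damaged nodes, each flipping back at rate $A_-(p(t))$; taking the net flux and dividing by $n$ gives \eqref{eq:pdiff} in one line. You instead derive the Kolmogorov forward equation for each marginal $p_j(t)=\mathbb{P}(Z_j(t)=1)$ from the generator, apply (A1), and then close the system. Both routes land on the same ODE, and your version makes explicit the point (left implicit in the paper) at which the approximation enters: replacing $A_\pm(f_j(t))$ by the deterministic $A_\pm(p(t))$ inside the expectation. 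One small simplification: your Picard--Lindel\"of step establishing $p_j\equiv p$ is valid but unnecessary. Since the post-(A1) nodal equations are affine in $p_j$ with $j$-independent coefficients $A_\pm(p(t))$, you can just average the $n$ equations directly,
\begin{equation*}
\frac{\ddd}{\ddd t}\,p(t)=\frac{1}{n}\sum_{j=1}^n\frac{\ddd}{\ddd t}\,p_j(t)
= A_+(p(t)) - \bigl(A_+(p(t))+A_-(p(t))\bigr)\,p(t)
= (1-p(t))A_+(p(t)) - p(t)A_-(p(t)),
\end{equation*}
which closes in $p(t)$ without first proving the $p_j$ are all equal. Your remark that (A2) is not actually used here and only enters for $\chi(t)$ via Lemma~\ref{lem:a2} is also accurate; the paper's statement ``Under Assumptions (A1)--(A2)'' is merely a blanket hypothesis.
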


\begin{proof}
The rate at which one of the $n(1-p(t))$ healthy nodes becomes damaged equals $n$ times the first term, while the rate at which one of the $np(t)$ damaged nodes becomes healthy equals $n$ times the second term. The difference is the net rate at which the number of damaged nodes increases or decreases. Divide by $n$ to get the fraction of damaged nodes. Note that $p(0)=0$ because initially all nodes are healthy.
\end{proof}


\subsection{The approximate mortality rate}

Our task is to derive the Gompertz law based on {\sc Assumption (A1)} and {\sc Assumption (A2)}. In this section we derive a formula for the mortality rate. The argument proceeds in steps, listed as Lemmas~\ref{lem:mt}--\ref{lem:a2} below, leading up to Theorem~\ref{th:Glaw}. In Section~\ref{sec:anal} we analyse this formula for several choices of the model parameters to get a feel for how the Gompertz law may emerge. We also derive a formula for the damage fraction at death, stated in Theorem~\ref{th:frdeath} below.

\begin{Lemma}
\label{lem:mt}
For every $t \geq 0$,
\begin{equation}
\label{eq:mortalitysplit}
\begin{aligned}
m(t) &= \lim_{\Delta \downarrow 0} \frac{1}{\Delta}\,\mathbb{P}(\tau \leq t + \Delta \mid \tau \geq t)\\
&= \lim_{\Delta \downarrow 0} \frac{1}{\Delta}\,
\mathbb{P}\big(\exists\,0 \leq u \leq \Delta\colon\, {\bf Z}(t+u) \in S_{\rm{death}} \mid {\bf Z}(t) \in S_{\rm{predeath}}\big)\\
&\qquad \times \mathbb{P}\big({\bf Z}(t) \in S_{\rm{predeath}} \mid {\bf Z}(u) \not \in S_{\rm{death}}, 0 \leq u \leq t \big).
\end{aligned}
\end{equation}
\end{Lemma}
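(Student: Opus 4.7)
The plan is to derive the factorization by conditioning on the state of $\mathbf{Z}(t)$ and showing that only the pre-death branch contributes to leading order in $\Delta$. First, I would use the law of total probability on the partition $\{0,1\}^n \setminus S_{\rm{death}} = S_{\rm{other}} \cup S_{\rm{predeath}}$ to write
\[
\mathbb{P}(\tau \leq t + \Delta \mid \tau \geq t)
= \sum_{A \in \{S_{\rm{other}}, S_{\rm{predeath}}\}}
\mathbb{P}(\tau \leq t + \Delta \mid \mathbf{Z}(t) \in A,\,\tau \geq t)\,
\mathbb{P}(\mathbf{Z}(t) \in A \mid \tau \geq t),
\]
and then rewrite the event $\{\tau \geq t\}$ as $\{\mathbf{Z}(u) \notin S_{\rm{death}},\ 0 \leq u \leq t\}$.

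Next I would show that the $S_{\rm{other}}$ term contributes $O(\Delta^2)$ and therefore vanishes after dividing by $\Delta$ and sending $\Delta \downarrow 0$. Starting from $\mathbf{Z}(t) \in S_{\rm{other}}$ one has $(Z_1(t),Z_2(t)) = (0,0)$, so by the time-inhomogeneous Markov property the process must make at least two transitions affecting the mortality nodes in $[t, t+\Delta]$ to enter $S_{\rm{death}}$: first into $S_{\rm{predeath}}$, then from $S_{\rm{predeath}}$ into $S_{\rm{death}}$. Since the rates $A_\pm(f)$ in \eqref{eq:Apmdef} are uniformly bounded on $f \in [0,1]$, the probability of two or more jumps in a window of length $\Delta$ is $O(\Delta^2)$ uniformly in $t$, using a standard dominating-Poisson-process argument.

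For the remaining $S_{\rm{predeath}}$ branch, I would invoke the Markov property at time $t$ to remove the conditioning on the trajectory $\{\mathbf{Z}(u): 0 \leq u \leq t\}$ in the first factor, leaving
\[
\mathbb{P}(\tau \leq t+\Delta \mid \mathbf{Z}(t) \in S_{\rm{predeath}},\,\tau \geq t)
= \mathbb{P}\big(\exists\,0 \leq u \leq \Delta\colon \mathbf{Z}(t+u) \in S_{\rm{death}} \mid \mathbf{Z}(t) \in S_{\rm{predeath}}\big).
\]
The second factor $\mathbb{P}(\mathbf{Z}(t) \in S_{\rm{predeath}} \mid \tau \geq t)$ is exactly the second factor in the claim. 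Combining this with the vanishing $O(\Delta^2)/\Delta$ contribution from $S_{\rm{other}}$ and passing to the limit yields the asserted product form.

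The main obstacle I expect is the uniform $O(\Delta^2)$ control on the $S_{\rm{other}}$ branch: because the chain is time-\emph{in}homogeneous, one must be careful that the Markov property applies pathwise on the conditioning event and that the bound on two-jump probabilities is uniform in $t$. This is handled by the fact that $A_+,A_-$ are bounded on $[0,1]$, so one can couple jump times with a homogeneous Poisson process of rate $\overline{\Gamma} := 2n\max\{A_+(1), A_-(0)/R^{-1}\}$ (say) and control two-jump probabilities by $\tfrac{1}{2}(\overline{\Gamma}\Delta)^2$ uniformly. Once this technicality is settled, the remaining algebra is just splitting a joint probability and invoking the Markov property, which is routine.
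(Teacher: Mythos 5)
Your proposal is correct and follows essentially the same route as the paper's own proof. The paper organizes the argument around the event identity $\{t\le\tau\le t+\Delta\}\sim A_t\cap B_t\cap C_{t,\Delta}$ (with $A_t=\{\tau\ge t\}$, $B_t=\{\mathbf{Z}(t)\in S_{\rm predeath}\}$, $C_{t,\Delta}=\{\exists\,u\in[0,\Delta]:\mathbf{Z}(t+u)\in S_{\rm death}\}$) and then factorizes $\mathbb{P}(A_t\cap B_t\cap C_{t,\Delta})/\mathbb{P}(A_t)=\mathbb{P}(C_{t,\Delta}\mid B_t)\,\mathbb{P}(B_t\mid A_t)$ using the Markov property at time $t$; your version reaches the same factorization via the law of total probability over $\{S_{\rm other},S_{\rm predeath}\}$, with your dominating-Poisson estimate making explicit the $O(\Delta^2)$ discard of the $S_{\rm other}$ branch that the paper summarizes in one sentence, and your final step invoking the Markov property in exactly the same way. (Minor typo aside: your bound $\overline{\Gamma}$ is garbled, since $A_-(0)/R^{-1}=\Gamma_0$; you presumably meant $n\max\{A_+(1),A_-(0)\}$, but the idea is sound.)
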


\begin{proof}
Abbreviate
\[
\begin{aligned}
A_t &= \{{\bf Z}(u) \not \in S_{\rm{death}}, 0 \leq u \leq t\},\\
B_t &= \{{\bf Z}(t) \in S_{\rm{predeath}}\},\\
C_{t,\Delta} &=  \{\exists\,0 \leq u \leq \Delta\colon\, {\bf Z}(t+u) \in S_{\rm{death}}\}.
\end{aligned}
\]
Over short time intervals, ${\bf Z}$ can only reach $S_{\rm{death}}$ by going from $S_{\rm{predeath}}$ to $ S_{\rm{death}}$ in a single jump. Hence, for $\Delta \downarrow 0$,
\[
\{t \leq  \tau \leq t + \Delta\} \sim A_t \cap B_t \cap C_{t,\Delta},
\]
where $\sim$ means equality up to an error that vanishes with $\Delta \downarrow 0$. Since
\[
\{\tau \geq t\} = A_t,
\]
it follows that, for $\Delta \downarrow 0$,
\[
\begin{aligned}
\mathbb{P}(\tau \leq t+\Delta \mid  \tau \geq t)
&\sim \frac{\mathbb{P}(A_t \cap B_t \cap C_{t,\Delta} )}{\mathbb{P}(A_t)}
= \frac{\mathbb{P}(A_t \cap C_{t,\Delta} \mid B_t)\,\mathbb{P}(B_t)}{\mathbb{P}(A_t)}\\
&= \frac{\mathbb{P}(C_{t,\Delta} \mid B_t)\,\mathbb{P}(A_t \mid B_t)\,\mathbb{P}(B_t)}{\mathbb{P}(A_t)}
= \mathbb{P}(C_{t,\Delta} \mid B_t)\,\mathbb{P}(B_t \mid A_t),
\end{aligned}
\]
where the second equality uses the Markov property at time $t$.
\end{proof}

From the mean-field assumption we obtain the following representation of the first factor in \eqref{eq:mortalitysplit}.

\begin{Lemma}
\label{lem:a1}
Under {\sc Assumption (A1)},
\[
\lim_{\Delta \downarrow 0} \frac{1}{\Delta}
\mathbb{P}\big(\exists\,0 \leq u \leq \Delta\colon\, {\bf Z}( u+t) \in S_{\rm{death}} \mid  {\bf Z}(t) \in S_{\rm{predeath}}\big)
 \approx A_+(p(t)).
\]
\end{Lemma}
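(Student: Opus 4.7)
The plan is to exploit the Markov property at time $t$ and then expand the conditional transition probability to leading order in $\Delta$, invoking Assumption (A1) to replace the local frailty $f_i(t)$ by the mean damage fraction $p(t)$.

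First I would condition on $\mathbf{Z}(t) = z$ with $z \in S_{\rm predeath}$. Up to symmetry one may assume $(z_1, z_2) = (1, 0)$. The only way for $\mathbf{Z}$ to be in $S_{\rm death}$ at some time in $[t, t+\Delta]$ via a single jump is for the healthy mortality node (node $2$) to flip from state $0$ to state $1$. By \eqref{expdamheal} this occurs at rate $\Gamma_+(2, t) = A_+(\hat f_2(z))$. Since the functions $A_\pm$ are bounded on $[0,1]$ and the state space is finite, all jump rates out of any state are uniformly bounded, so the probability of two or more jumps in $[t, t+\Delta]$ is $O(\Delta^2)$. Consequently, the conditional probability of entering $S_{\rm death}$ during $[t, t+\Delta]$ equals $A_+(\hat f_2(z))\Delta + o(\Delta)$; the symmetric case $(z_1, z_2) = (0, 1)$ gives $A_+(\hat f_1(z))\Delta + o(\Delta)$.

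Next I would invoke Assumption (A1): in either configuration, $A_+(\hat f_i(z)) \approx A_+(p(t))$, because the mean-field approximation asserts $\Gamma_+(i,t) \approx A_+(p(t))$ uniformly in $i$. Dividing by $\Delta$ and letting $\Delta \downarrow 0$, the conditional rate becomes $A_+(p(t))$ regardless of which of the two pre-death configurations is realized, which is exactly the claim.

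The main obstacle, as the authors themselves flag elsewhere, is not the control of multi-jump contributions (a routine $O(\Delta^2)$ estimate given the boundedness of $A_\pm$) but rather the quantification of the error introduced by the mean-field replacement of $\hat f_i(z)$ by $p(t)$. Accordingly the lemma uses $\approx$, consistent with the paper's convention that sharpening this mean-field step to a rigorous error bound is left as an open mathematical challenge.
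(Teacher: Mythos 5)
Your proposal is correct and follows essentially the same reasoning as the paper's (very terse) proof: identify that a transition to $S_{\rm death}$ within $[t,t+\Delta]$ must, to leading order, be the single jump of the one healthy mortality node to the damaged state, and then replace its rate $A_+(\hat f_i(z))$ by $A_+(p(t))$ via {\sc Assumption (A1)}. You merely spell out the $O(\Delta^2)$ control of multi-jump events and the symmetry between the two pre-death configurations, which the paper leaves implicit.
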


\begin{proof}
The rate to jump from a predeath state to the death state is approximately $A_+(p(t))$, because one mortality node is damaged, the other mortality node is healthy, and the latter needs to switch to damaged.
\end{proof}

From the homogeneity assumption we obtain the following representation of the second factor in \eqref{eq:mortalitysplit}.

\begin{Lemma}
\label{lem:a2}
Under {\sc Assumption (A2)},
\[
\mathbb{P}\big({\bf Z}(t) \in S_{\rm{predeath}} \mid {\bf Z}(u) \not \in S_{\rm{death}}, 0 \leq u \leq t \big)
\approx \chi(t)
\]
with $\chi(t)$ the solution of the differential equation
\begin{equation}
\label{eq:chidiff}
\frac{\ddd}{\ddd t}\,\chi(t) = (1-\chi(t))\,2A_+(p(t)) - \chi(t)\,A_-(p(t)), \qquad \chi(0) = 0.
\end{equation}
\end{Lemma}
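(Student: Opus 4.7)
My plan is to invoke Assumption~(A2) to reduce the analysis to the three-state time-inhomogeneous Markov chain on $\{S_{\rm{other}},S_{\rm{predeath}},S_{\rm{death}}\}$ of Figure~\ref{fig:aggregated}, and to work with the killed process obtained by terminating the chain on first entry into $S_{\rm{death}}$. Writing $\sigma_0(t)=\mathbb{P}({\bf Z}(t)\in S_{\rm{other}},\tau>t)$ and $\sigma_P(t)=\mathbb{P}({\bf Z}(t)\in S_{\rm{predeath}},\tau>t)$, the Kolmogorov forward equations read
\[
\tfrac{\ddd}{\ddd t}\sigma_0(t) = -2A_+(p(t))\,\sigma_0(t) + A_-(p(t))\,\sigma_P(t),
\]
\[
\tfrac{\ddd}{\ddd t}\sigma_P(t) = 2A_+(p(t))\,\sigma_0(t) - [A_+(p(t))+A_-(p(t))]\,\sigma_P(t),
\]
and the survival probability $s(t):=\sigma_0(t)+\sigma_P(t)$ satisfies $s'(t)=-A_+(p(t))\sigma_P(t)$.

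Next I would set $\tilde\chi(t):=\sigma_P(t)/s(t)=\mathbb{P}({\bf Z}(t)\in S_{\rm{predeath}}\mid \tau>t)$, which is exactly the conditional probability appearing on the left-hand side of the lemma. Applying the quotient rule and substituting the three displays above produces the exact ODE
\[
\tfrac{\ddd}{\ddd t}\tilde\chi(t) = (1-\tilde\chi(t))\,2A_+(p(t)) - \tilde\chi(t)\,A_-(p(t)) - A_+(p(t))\,\tilde\chi(t)(1-\tilde\chi(t)).
\]
The first two terms are exactly the right-hand side of the claimed ODE for $\chi(t)$, and the initial condition $\chi(0)=\tilde\chi(0)=0$ is immediate since ${\bf Z}(0)=(0,\dots,0)\in S_{\rm{other}}$. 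The last term is the ``leak correction'' caused by conditioning on non-absorption at $S_{\rm{death}}$.

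The main obstacle, in line with the paper's use of $\approx$, is to justify dropping that correction. I would argue that, over the relevant age range, the escape flux $A_+(p(t))\,\sigma_P(t)$ into $S_{\rm{death}}$ is small relative to the internal fluxes $2A_+(p(t))\,\sigma_0(t)$ and $A_-(p(t))\,\sigma_P(t)$ exchanging mass between $S_{\rm{other}}$ and $S_{\rm{predeath}}$, so that conditioning on survival perturbs the two-state $\{S_{\rm{other}},S_{\rm{predeath}}\}$ sub-chain only weakly. Equivalently, one can derive the stated ODE directly by regarding $\chi(t)$ as the probability of being in $S_{\rm{predeath}}$ for the pure two-state time-inhomogeneous chain obtained from Figure~\ref{fig:aggregated} by deleting $S_{\rm{death}}$ and keeping the rates $2A_+(p(t))$ and $A_-(p(t))$; its Kolmogorov equation is precisely \eqref{eq:chidiff}. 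Quantifying the gap between $\tilde\chi$ and $\chi$ rigorously is the delicate point, and fits naturally into the ``mathematical challenge'' of error bounds already acknowledged in the paper.
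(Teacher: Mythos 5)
Your proof reaches the same ODE \eqref{eq:chidiff} by the same basic reduction as the paper (aggregating to the three-state chain of Figure~\ref{fig:aggregated} and then passing to the two-state chain on $\{S_{\rm other}, S_{\rm predeath}\}$), but it is substantially more careful about where the approximation enters. The paper's proof simply asserts that the law of the aggregated chain conditioned on non-absorption at $S_{\rm death}$ is \emph{the same} as the law of the chain with the $S_{\rm death}$ rates set to zero, and then reads off the Kolmogorov equation for that two-state chain. Strictly speaking that assertion is false: conditioning on survival is a Doob $h$-transform, not a deletion of the killing rates, and it biases the process away from the high-hazard state $S_{\rm predeath}$. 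Your computation makes this precise. Writing $\sigma_0,\sigma_P$ for the sub-probabilities of the killed process and $\tilde\chi=\sigma_P/s$, you derive the \emph{exact} ODE
\[
\tfrac{\ddd}{\ddd t}\tilde\chi = (1-\tilde\chi)\,2A_+(p(t)) - \tilde\chi\,A_-(p(t)) - A_+(p(t))\,\tilde\chi(1-\tilde\chi),
\]
which differs from \eqref{eq:chidiff} by the final term, the precise quantity the paper's heuristic silently discards. Both routes lead to the same approximation, but yours has the advantage of exhibiting the error term explicitly, showing that the lemma's $\approx$ genuinely carries content, and indicating when the approximation is trustworthy (namely when $A_+(p(t))\,\tilde\chi(1-\tilde\chi)$ is small relative to the internal relaxation rates). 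This is a clean contribution toward the error bounds the paper flags as an open mathematical challenge.
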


\begin{proof}
The law of the time-homogeneous Markov process ${\bf Z}$ on the set of three states $\{S_{\rm{other}},S_{\rm{predeath}},S_{\rm{death}}\}$ \emph{conditional} on not entering $S_{\rm{death}}$ is the same as the law of the time-homogeneous Markov process ${\bf Z}'$ on the set of two states $\{S_{\rm{other}},S_{\rm{predeath}}\}$ where the rates to and from $S_{\rm{death}}$ are \emph{set to zero} (see Figure~\ref{fig:aggregated}). Hence $\chi(t) = \mathbb{P}\big({\bf Z}'(t) \in S_{\rm{predeath}})$. This probability solves \eqref{eq:chidiff}, because the net rate at which $\chi(t)$ increases is the difference of the rates at which ${\bf Z}'$ jumps into, respectively, jumps out of the state $S_{\rm{predeath}}$. Note that $\chi(0)=0$ because initially all nodes are healthy.
\end{proof}

\medskip
Combining Lemmas~\ref{lem:a3}--\ref{lem:a2}, we can identify the approximate mortality rate.

\begin{Theorem}{\bf [Approximate mortality rate]}
\label{th:Glaw}
Under {\sc Assumptions (A1)--(A2)},
\begin{equation}
\label{eq:GMlaw}
m(t) \approx A_+(p(t))\,\chi(t).
\end{equation}
\end{Theorem}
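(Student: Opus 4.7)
The plan is to recognize that Theorem \ref{th:Glaw} is essentially the product of the three preceding lemmas, so the proof reduces to assembling them and checking that the two factors in the Lemma~\ref{lem:mt} decomposition are the ones that Lemmas~\ref{lem:a1} and \ref{lem:a2} estimate.

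First, I would start from the identity established in Lemma~\ref{lem:mt}, namely the factorisation
\[
m(t) = \Bigl(\lim_{\Delta\downarrow 0}\tfrac{1}{\Delta}\,\mathbb{P}(C_{t,\Delta}\mid B_t)\Bigr)\cdot \mathbb{P}(B_t\mid A_t),
\]
with $A_t,B_t,C_{t,\Delta}$ as defined there. The existence of the limit and the product form are already justified by the short-time analysis (only jumps from $S_{\rm predeath}$ can reach $S_{\rm death}$ in an interval of length $\Delta$) plus a single application of the Markov property at time $t$.

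Next I would substitute the two approximations. By Lemma~\ref{lem:a1}, the first factor is approximated by $A_+(p(t))$ under {\sc Assumption (A1)}: the single mortality node that still has to flip does so at the mean-field rate $A_+(p(t))$, independently of which of the two predeath configurations $(1,0)$ or $(0,1)$ we are in, since under the mean-field approximation both rates collapse to the same value. By Lemma~\ref{lem:a2}, under {\sc Assumption (A2)} the second factor is approximated by $\chi(t)$, the solution of the autonomous ODE \eqref{eq:chidiff} driven by $p(t)$ from Lemma~\ref{lem:a3}. Multiplying the two yields \eqref{eq:GMlaw}.

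The only non-bookkeeping point is to make sure that the two approximations are being applied to \emph{independent} factors of $m(t)$, so that the errors combine multiplicatively rather than interact; this is exactly why the factorisation in Lemma~\ref{lem:mt} was carried out first. The genuine obstacle, not addressed by this plan (and flagged in the paper as an open problem), is to quantify the $\approx$ signs: bounding the mean-field error from {\sc Assumption (A1)} (where Remark~\ref{rem:Jensen} already warns that bringing the expectation inside the exponential undershoots the true rate) and the aggregation error from {\sc Assumption (A2)}, and then tracking how these propagate through the nonlinear ODEs \eqref{eq:pdiff} and \eqref{eq:chidiff}. Within the scope of the theorem as stated, however, no further work is needed beyond composing the three lemmas.
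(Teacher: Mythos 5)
Your proposal matches the paper's (implicit) proof exactly: Theorem~\ref{th:Glaw} is obtained by composing the factorisation in Lemma~\ref{lem:mt} with the two approximations from Lemmas~\ref{lem:a1} and~\ref{lem:a2}, and multiplying. The additional remarks you make about error propagation and the open problem of quantifying the $\approx$ signs are consistent with the paper's own caveats.
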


\noindent
The solution of \eqref{eq:pdiff} tells us how $p(t)$ grows as a function of $t$. Once this profile has been determined, the solution of $\eqref{eq:chidiff}$ tells us how $\chi(t)$ grows as a function of $t$. The two together fix how $m(t)$ grows as a function of $t$.

A further interesting quantity is the \emph{damage fraction at death}.

\begin{Theorem}{\bf [Damage fraction at death]}
\label{th:frdeath}
\begin{equation}
\label{eq:frdeath}
\mathbb{E}[\hat{p}(\tau)] \approx \int_0^\infty \ddd t\, A_+(p(t))\,p(t)\,\chi(t)\,\exp\left[-\int_0^t \ddd u\, A_+(p(u))\,\chi(u)\right].
\end{equation}
\end{Theorem}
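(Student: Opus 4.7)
The plan is to recognise the integrand in \eqref{eq:frdeath} as the product of $p(t)$ with the approximate density of the lifetime $\tau$. Three ingredients are needed: a conditioning decomposition for $\mathbb{E}[\hat{p}(\tau)]$, identification of the density of $\tau$ using Theorem~\ref{th:Glaw}, and a mean-field argument that removes the conditioning on $\{\tau = t\}$.

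First, since ${\bf Z}$ is a time-inhomogeneous pure-jump Markov process and ${\bf Z}(0)\notin S_{\rm{death}}$, the hitting time $\tau$ of $S_{\rm{death}}$ admits a density $f_\tau$ on $[0,\infty)$, so
\begin{equation*}
\mathbb{E}[\hat{p}(\tau)] = \int_0^\infty \mathbb{E}[\hat{p}(t) \mid \tau = t]\, f_\tau(t)\, \ddd t .
\end{equation*}
From \eqref{eq:mortality}--\eqref{eq:survival} one has $f_\tau(t) = m(t)\,s(t)$ with $s(t) = \exp[-\int_0^t m(u)\,\ddd u]$. Substituting the approximation $m(t) \approx A_+(p(t))\,\chi(t)$ from Theorem~\ref{th:Glaw} gives
\begin{equation*}
f_\tau(t) \approx A_+(p(t))\,\chi(t)\,\exp\!\left[-\int_0^t \ddd u\, A_+(p(u))\,\chi(u)\right],
\end{equation*}
which reproduces every factor in \eqref{eq:frdeath} except the extra $p(t)$.

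The remaining task is to justify $\mathbb{E}[\hat{p}(t) \mid \tau = t] \approx p(t)$. This is natural under {\sc Assumption (A1)}: the event $\{\tau = t\}$ is determined by the two mortality nodes, which together contribute only $2/n$ to $\hat{p}(t)$, while the remaining $n-2$ aging-node indicators evolve at the common mean-field rates $A_\pm(p(t))$ and, by the scale-freeness and disassortativity of $G$, are essentially decorrelated from the mortality-node states for large $n$. Hence conditioning on $\{\tau = t\}$ shifts the conditional expectation of $\hat{p}(t)$ away from $\mathbb{E}[\hat{p}(t)]=p(t)$ by only a vanishing amount. Substituting this into the conditioning identity and inserting the formula for $f_\tau(t)$ produces \eqref{eq:frdeath}.

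The main obstacle is precisely this last step. {\sc Assumption (A1)} is only a rate-level approximation and does not by itself deliver a joint law on $(\hat{p}(t),\tau)$ under which the conditioning is rigorously harmless. A fully rigorous argument would require propagation-of-chaos or concentration bounds on $\hat{p}(t)-p(t)$ that remain valid uniformly in the conditioning event $\{\tau = t\}$, together with a quantitative decoupling of the mortality-node trajectories from the bulk aging nodes on a disassortative scale-free graph — exactly the type of quantitative control that the authors flag as an open problem at the start of Section~\ref{sec:analysis}.
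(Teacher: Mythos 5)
Your proposal is correct and follows essentially the same route as the paper's proof: condition on $\{\tau = t\}$, write the density of $\tau$ as $m(t)\exp[-\int_0^t m(u)\,\ddd u]$, substitute the approximation $m(t)\approx A_+(p(t))\,\chi(t)$ from Theorem~\ref{th:Glaw}, and replace the conditional expectation of $\hat p(t)$ by $p(t)$. The only difference is one of emphasis: the paper applies the replacement $\mathbb{E}[\hat p(\tau)]\approx\int_0^\infty p(t)\,\mathbb{P}(\tau\in\ddd t)$ silently under the mean-field approximation, whereas you explicitly isolate the step $\mathbb{E}[\hat p(t)\mid\tau=t]\approx p(t)$ and correctly flag it as the genuinely delicate part of the argument.
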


\begin{proof}
By \eqref{eq:deathtime}, \eqref{eq:survival} and \eqref{eq:frailtyl},
\[
\mathbb{E}[\hat{p}(\tau)] \approx \int_0^\infty p(t)\,\mathbb{P}(\tau \in \ddd t) = \int_0^\infty p(t) \left[-\frac{\ddd}{\ddd t} s(t)\right]\,\ddd t.
\]
It follows from \eqref{eq:mortality} that
\[
s(t) = \exp\left[-\int_0^t m(u)\,\ddd u\right]
\]
and hence
\[
-\frac{\ddd}{\ddd t} s(t) = m(t)\,\exp\left[-\int_0^t m(u)\,\ddd u\right].
\]
Use \eqref{eq:GMlaw} to get the claim.
\end{proof}


\subsection{Analysis of the approximate mortality rate}
\label{sec:anal}

Having derived an approximate formula for the mortality rate $m(t)$ in terms of the differential equations in \eqref{eq:pdiff} and \eqref{eq:chidiff}, our next task is to see whether this formula produces the Gompertz law for times that are neither too small nor too large.

We first \emph{scale time} by $1/\Gamma_0$ so as to remove the parameter $\Gamma_0$. Putting
\begin{equation}
\label{eq:micmac}
p^*(s) = p\left(\frac{s}{\Gamma_0}\right), \qquad \chi^*(s) = \chi\left(\frac{s}{\Gamma_0}\right),
\qquad m^*(s) = m\left(\frac{s}{\Gamma_0}\right),
\end{equation}
and
\begin{equation}
\label{eq:A*defs}
A^*_+(p) = \eee^{r_+ p}, \qquad A^*_-(p) = \frac{1}{R}\,\eee^{-r_- p}, \qquad p \in [0,1],
\end{equation}
we can rewrite \eqref{eq:GMlaw} as
\begin{equation}
\label{eq:m*def}
m^*(s) \approx \Gamma_0\,A^*_+(p^*(s))\,\chi^*(s),
\end{equation}
where $p^*(s)$ and $\chi^*(s)$ solve the differential equations (recall \eqref{eq:pdiff} and \eqref{eq:chidiff})
\begin{equation}
\label{eq:scaldiff}
\begin{array}{lll}
&\frac{\ddd}{\ddd s}\, p^*(s) = (1-p^*(s))\,A^*_+(p^*(s)) - p^*(s)\,A^*_-(p^*(s)), &p^*(0) = 0,\\[0.3cm]
&\frac{\ddd}{\ddd s}\, \chi^*(s) = (1-\chi^*(s))\,2A^*_+(p^*(s)) - \chi^*(s)\,A^*_-(p^*(s)), &\chi^*(0) = 0.
\end{array}
\end{equation}
Note that $\Gamma_0$ drops out of \eqref{eq:scaldiff} because
\begin{equation}
\label{eq:tsrel}
t = \frac{s}{\Gamma_0} \quad \longrightarrow \quad \frac{\ddd}{\ddd s} = \Gamma_0 \frac{\ddd}{\ddd t},
\end{equation}
so that a factor $\Gamma_0$ can be cancelled on both sides of the differential equations. Think of $t$ as the `microscopic' time scale on which the single nodes in the health network evolve, and of $s$ as the `macroscopic' time scale on which the network consisting of many nodes evolves as a whole. If $\Gamma_0 = \frac{1}{C}/{\rm year}$, then one unit of macroscopic time corresponds to $C$ years of human aging (think of $C$ as a `calibration parameter'.)

Since $p^*(s)$ and $\chi^*(s)$ are differentiable in $s$, they are continuous in $s$ as well. Note that $p^*(s)$ is strictly increasing in $s$ with $\lim_{s\to\infty} p^*(s) = p_\dagger$, where $p_\dagger$ solves the equation $(1-p_\dagger)\,A^*_+(p_\dagger) = p_\dagger\,A^*_-(p_\dagger)$, so
\begin{equation}
\label{eq:pdagger}
p_\dagger = \frac{A^*_+(p_\dagger)}{A^*_+(p_\dagger)+A^*_-(p_\dagger)} = \frac{1}{1 + \frac{1}{R}\, \eee^{-(r_+ + r_-)p_\dagger}}.
\end{equation}
Note that also $\chi^*(s)$ is strictly increasing in $s$ with $\lim_{s\to\infty} \chi^*(s) = \chi_\dagger$, where $\chi_\dagger$ solves the equation $(1-\chi_\dagger)\,2A^*_+(p_\dagger) = \chi_\dagger\, A^*_-(p_\dagger)$, so
\begin{equation}
\label{eq:chidagger}
\chi_\dagger = \frac{2A^*_+(p_\dagger)}{2A^*_+(p_\dagger)+A^*_-(p_\dagger)} = \frac{2p_\dagger}{1+p_\dagger}.
\end{equation}
It is further evident that $\frac{\ddd}{\ddd s} p^*(0) = 1$ and $\frac{\ddd}{\ddd s} \chi^*(0) = 2$. For later use we need the following observation.

\begin{Lemma}
\label{lem:sandwich}
$p^*(s) <  \chi^*(s) < 2p^*(s)$ for all $s > 0$.
\end{Lemma}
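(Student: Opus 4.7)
My plan is to reduce both inequalities to positivity assertions for scalar linear first-order ODEs with non-negative forcing, and then apply a variation-of-constants argument.

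First I would introduce the two auxiliary quantities
\[
h(s) = \chi^*(s) - p^*(s), \qquad g(s) = 2p^*(s) - \chi^*(s),
\]
both satisfying $h(0)=g(0)=0$. Subtracting the two ODEs in \eqref{eq:scaldiff} and simplifying, I expect to obtain
\[
h'(s) = \bigl(1-p^*(s)\bigr)\,A^*_+(p^*(s)) - h(s)\bigl[\,2A^*_+(p^*(s)) + A^*_-(p^*(s))\bigr],
\]
and an analogous computation should give
\[
g'(s) = 2\,h(s)\,A^*_+(p^*(s)) - g(s)\,A^*_-(p^*(s)).
\]
These are two linear first-order ODEs, each with a non-negative coefficient of the unknown and a forcing term that I need to argue is strictly positive.

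Next I would handle the inequality $p^*(s) < \chi^*(s)$ by showing $h(s)>0$ for $s>0$. The forcing term $(1-p^*(s))A^*_+(p^*(s))$ is strictly positive because $A^*_+ > 0$ and, by the monotonicity of $p^*$ together with \eqref{eq:pdagger}, one has $p^*(s) \in [0, p_\dagger) \subset [0,1)$. Hence the variation-of-constants formula yields
\[
h(s) = \int_0^s \exp\!\left[-\int_u^s \!\bigl(2A^*_+(p^*(v)) + A^*_-(p^*(v))\bigr)\,\ddd v\right] \bigl(1-p^*(u)\bigr) A^*_+(p^*(u))\,\ddd u,
\]
which is strictly positive for $s>0$.

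Finally I would treat the inequality $\chi^*(s) < 2p^*(s)$ by showing $g(s) > 0$ for $s > 0$. Having established $h(u) > 0$ for $u \in (0,s]$, the forcing term $2h(u)A^*_+(p^*(u))$ in the ODE for $g$ is strictly positive on $(0,s]$, so the same variation-of-constants representation
\[
g(s) = \int_0^s \exp\!\left[-\int_u^s A^*_-(p^*(v))\,\ddd v\right] 2\,h(u)\,A^*_+(p^*(u))\,\ddd u
\]
delivers $g(s) > 0$ for $s > 0$. The only subtle point is that, in contrast to $h'(0)=1$, one computes $g'(0)=0$, so $g$ is initially flat; the positivity of $g$ on $(0,\infty)$ is therefore driven entirely by the $h$-term, which is why the order of the two steps matters. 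With both $h>0$ and $g>0$ on $(0,\infty)$, the chain $p^*(s) < \chi^*(s) < 2p^*(s)$ follows.
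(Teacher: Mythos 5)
Your proposal is correct and follows essentially the same approach as the paper: you introduce the same auxiliary differences $h=\chi^*-p^*$ and $g=2p^*-\chi^*$ (called $\Delta_1$ and $\Delta_2$ in the paper), derive the same linear ODEs (your grouping of the forcing and damping terms in the $h$-equation is an equivalent rewriting of the paper's), and observe that positivity of $g$ must be fed by positivity of $h$. The only difference is in how the positivity is closed out: the paper uses a first-crossing contradiction argument (if the difference vanished at some first time $s_1>0$, the ODE would force a strictly positive derivative there), whereas you use a variation-of-constants integral representation with a strictly positive integrand; both are standard and valid, and your version has the minor advantage of not needing to separately assert positivity on an initial interval for $g$, whose derivative vanishes at $s=0$.
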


\begin{proof}
Put
\begin{equation*}
\Delta_1(s) = \chi^*(s) - p^*(s), \qquad \Delta_2(s) = 2p^*(s) - \chi^*(s).
\end{equation*}
It follows from \eqref{eq:scaldiff} that
\begin{equation*}
\begin{aligned}
\frac{\ddd}{\ddd s} \Delta_1(s) &= - \big[A^*_+(p^*(s)) + A^*_-(p^*(s))\big]\, \Delta_1(s) + A^*_+(p^*(s))\,(1-\chi^*(s)),\\
\frac{\ddd}{\ddd s} \Delta_2(s) &= - A^*_-(p^*(s))\, \Delta_2(s) + 2A^*_+(p^*(s))\,\Delta_1(s).
\end{aligned}
\end{equation*}
We know that $\Delta_1(0)=0$, $\Delta_1(s)>0$ for small $s>0$, and $\Delta_1(s)$ is continuous and differentiable in $s$. It is impossible  that $\Delta_1(s)$ hits the value $0$ at some $s_1>0$, because this would imply that $\frac{\ddd}{\ddd s} \Delta_1(s_1) = A^*_+(p^*(s_1))\,(1-\chi^*(s_1)) > 0$. Hence $\Delta_1(s)$ is everywhere strictly positive. Similarly, we know that $\Delta_2(0)=0$, $\Delta_2(s)>0$ for small $s>0$, and $\Delta_2(s)$ is continuous and differentiable in $s$. It is impossible that $\Delta_2(s)$ hits the value $0$ at some $s_2>0$, because this would imply that $\frac{\ddd}{\ddd s} \Delta_2(s_2) = 2A^*_+(p^*(s_2))\,\Delta_1(s_2) > 0$. Hence $\Delta_2(s)$ is everywhere strictly positive as well.
\end{proof}

Figures~\ref{fig:math}--\ref{fig:mathmort} show plots of $p^*(s)$, $\chi^*(s)$ and $\ln [A^*_+(p^*(s))\,\chi^*(s)]$ for $r_+ = r_- \in \{1,2,5\}$ and $R=1$ (carried out with the help of the programming language R). The latter curve is linear only for values of $s$ that are neither too small nor too large. The \emph{range} of $s$-values for which the linear fit is accurate depends on the \emph{choice} of the parameters. The curves for $p^*(s)$, $\chi^*(s)$ tend to be concave for small values of $r_+,r_-$ and convex-concave for large values of $r_+,r_-$. The curve for $\ln [A^*_+(p^*(s))\,\chi^*(s)]$ tends to be concave for small values of $r_+,r_-$ and concave-convex-concave for large values of $r_+,r_-$.

\begin{figure}[htbp]
\centering
\includegraphics[width=0.32\textwidth]{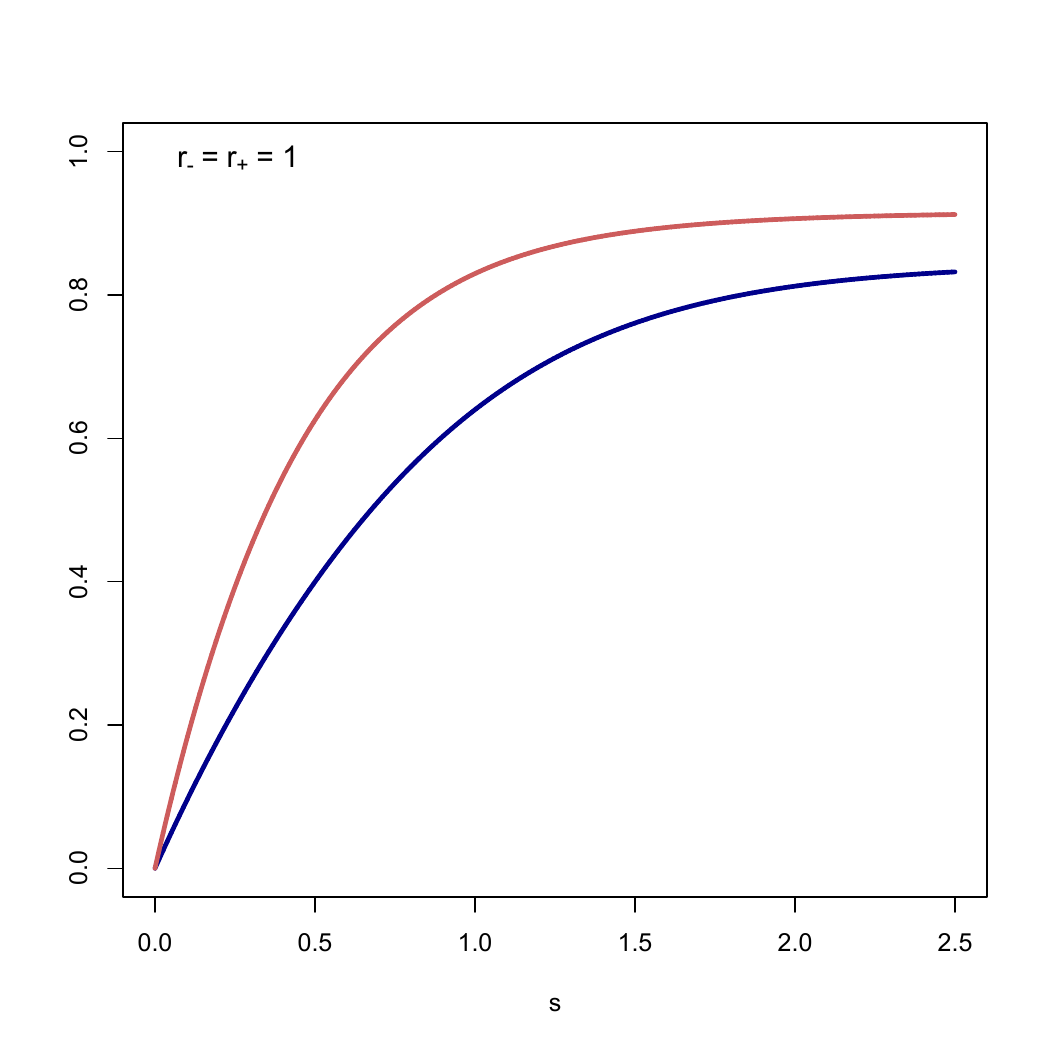}
\includegraphics[width=0.32\textwidth]{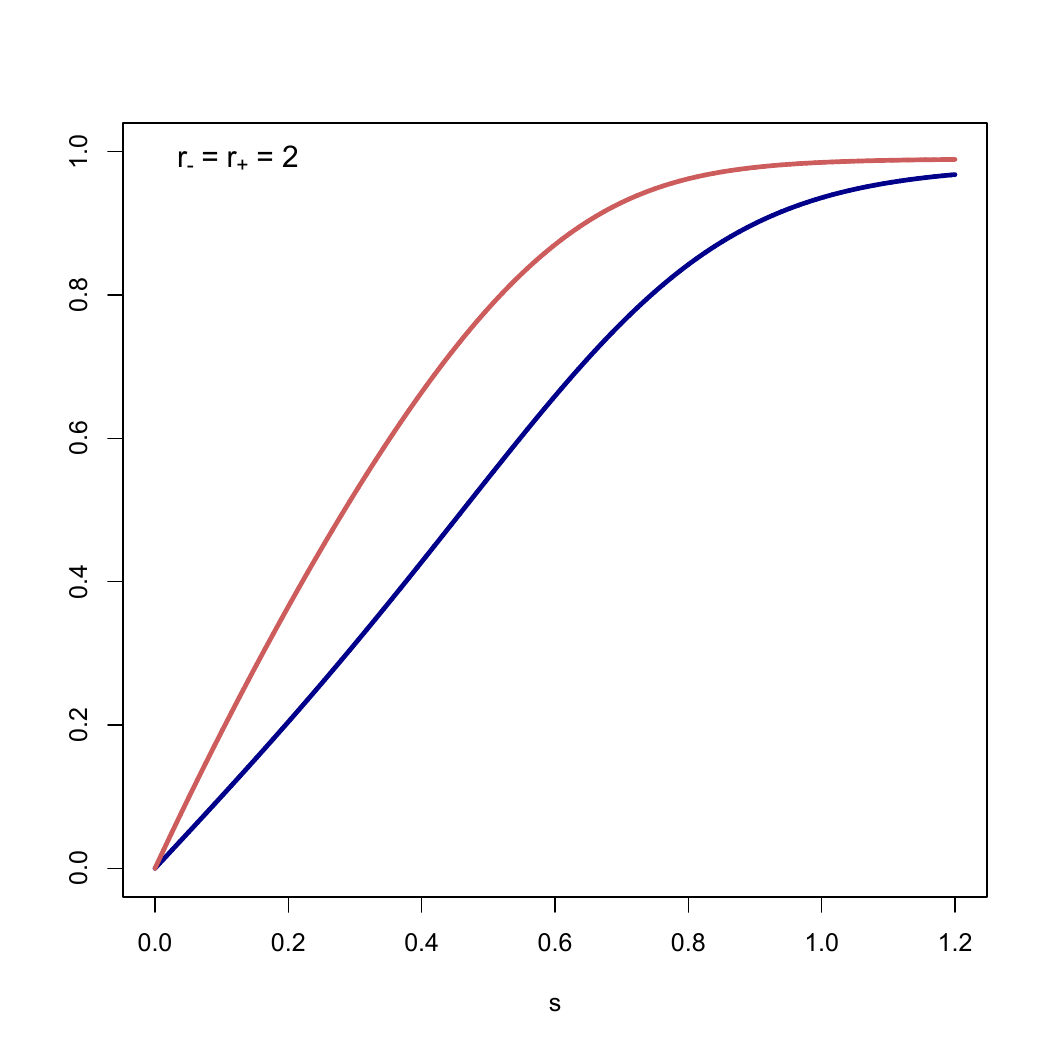}
\includegraphics[width=0.32\textwidth]{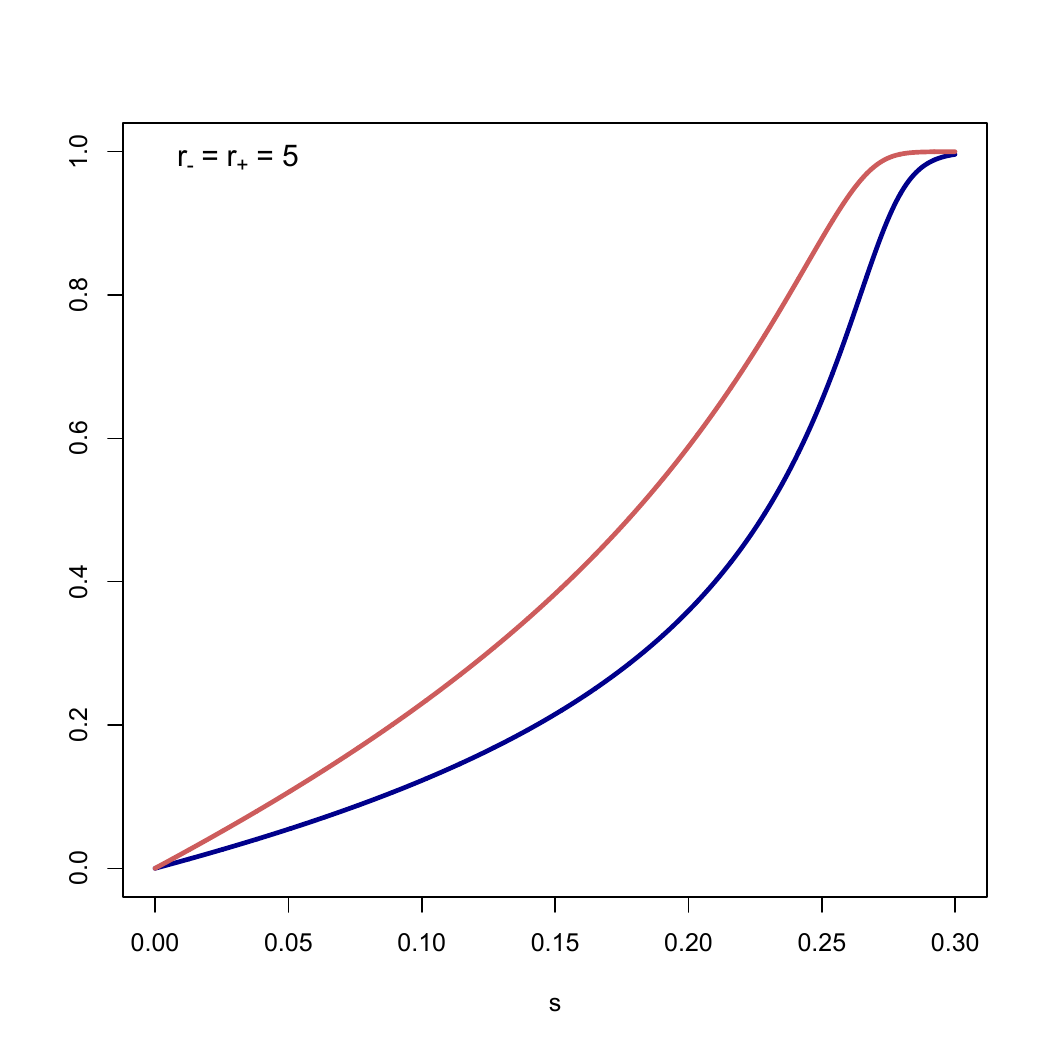}
\caption{\small Plots of $p^*(s)$ (= blue curve) and $\chi^*(s)$ (= red curve) for $r_+ = r_- \in \{1,2,5\}$ and $R=1$.}
\label{fig:math}
\end{figure}

\begin{figure}[htbp]
\centering
\includegraphics[width=0.32\textwidth]{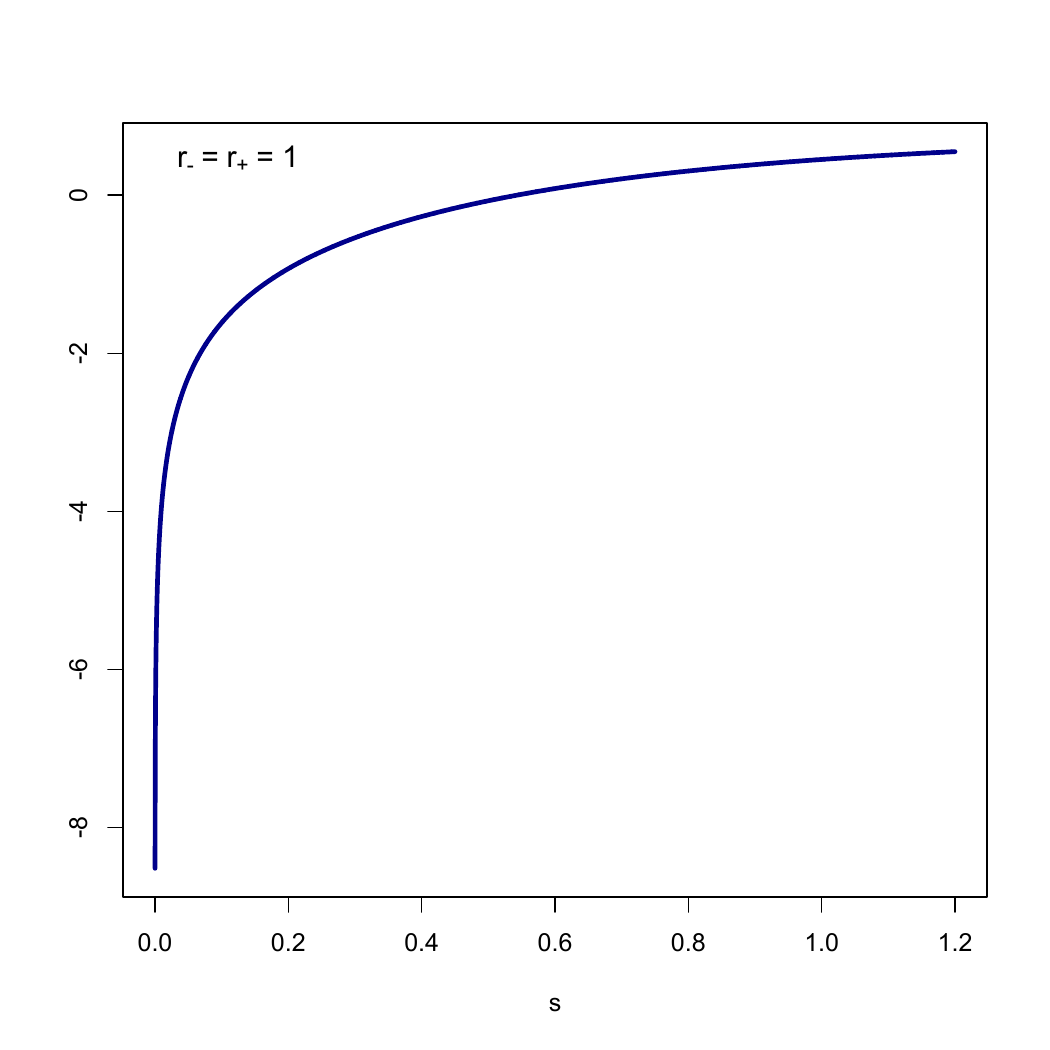}
\includegraphics[width=0.32\textwidth]{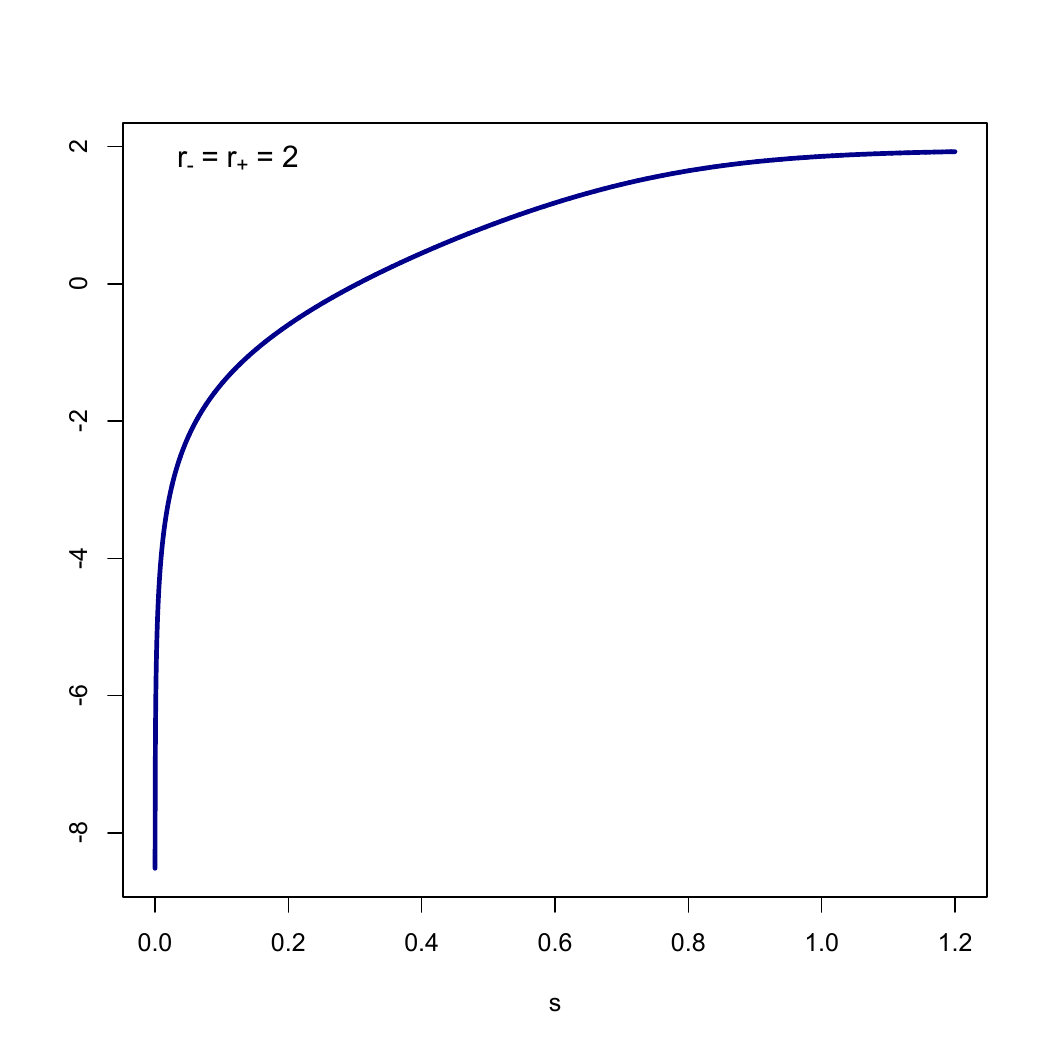}
\includegraphics[width=0.32\textwidth]{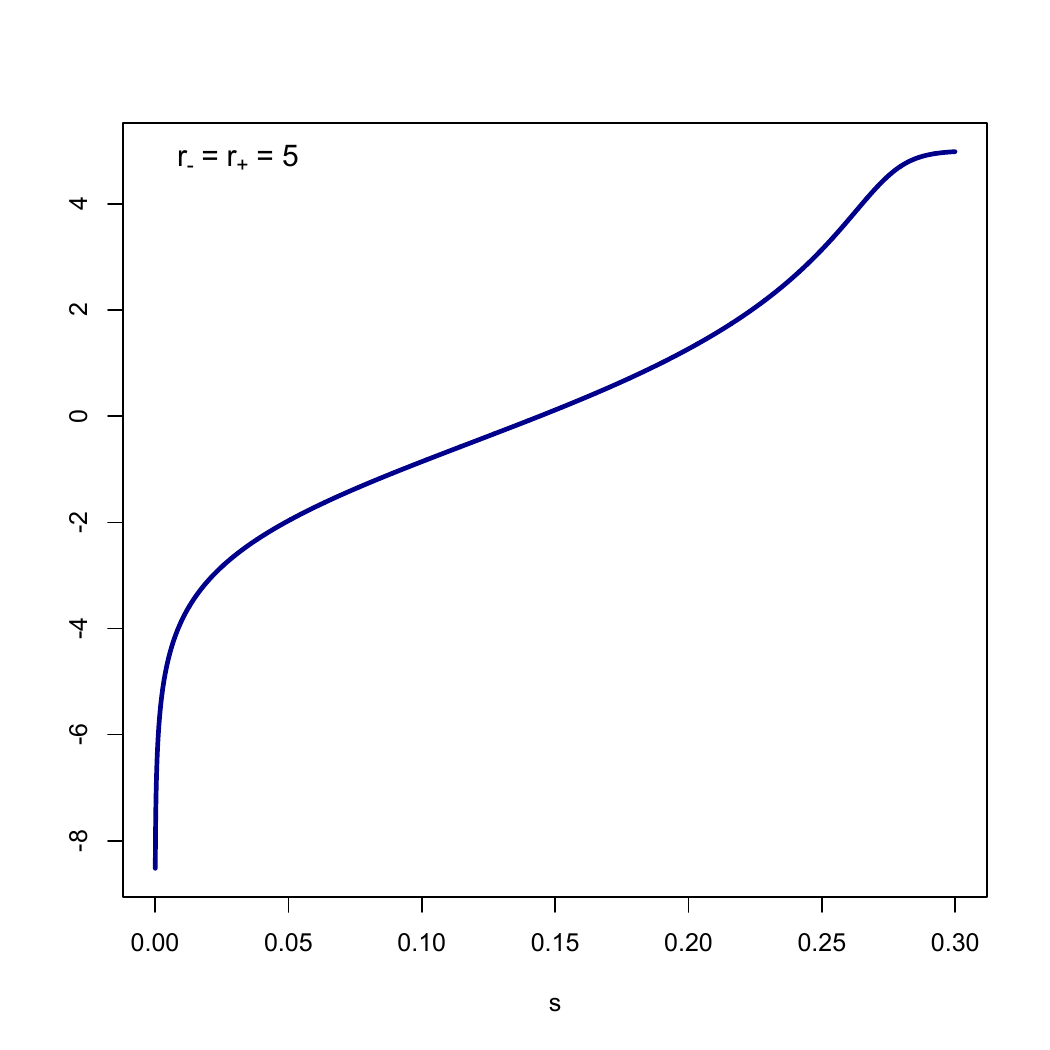}
\caption{\small Plots of $\ln [A^*_+(p^*(s))\,\chi^*(s)]$ for $r_+ = r_- \in \{1,2,5\}$ and $R=1$.}
\label{fig:mathmort}
\end{figure}


\section{Simulations and numerical results}
\label{sec:simulations}

While the aging process ${\bf Z}$ is well-defined for any choice of parameters $n,r_+,r_-,\Gamma_0,R$ with $n\in\mathbb{N}$ and $r_+,r_-,\Gamma_0,R \in (0,\infty)$, it only provides a fair model for the mortality rate when the parameters are chosen properly. In this section we discuss an instance of the aging process provided in \cite{MRFR2017}, where the parameters were chosen to be $n = 10^4$, $r_+ = 10.27$, $r_- = 6.5$, $R = 1.5$ and $\Gamma_0 = 0.00113$/year. For this choice of $\Gamma_0$, one unit of $s$ corresponds to $885$ years, and so $[40,100]$ years corresponds to $s \in [0.045,0.113]$.

Figure~\ref{fig:numplot} plots the damage fraction and the mortality rate in the range between 0 to 100 years of age. For the latter, `observational' means taken from the statistical data for US men in the period 2010-2019 \cite{HMDalt} (recall Figure~\ref{fig:USmen}), while `model' means taken from simulation of the \emph{network dynamics} based on $10^6$ i.i.d.\ samples \cite[Chapter 6]{Fth2020}. It is seen that $\ln m(t)$ is roughly linear over the age interval $[40,100]$ years, which confirms Gompertz law \emph{numerically}. The fitted line is $\ln m(t) \approx -9.76 + 0.085\,t$, which gives
\[
\alpha \approx \eee^{-9.76} \approx 5.8 \times 10^{-5},
\qquad \beta \approx 8.5 \times 10^{-2}.
\]

\begin{figure}[htbp]
\begin{center}
\includegraphics[width=0.45\textwidth]{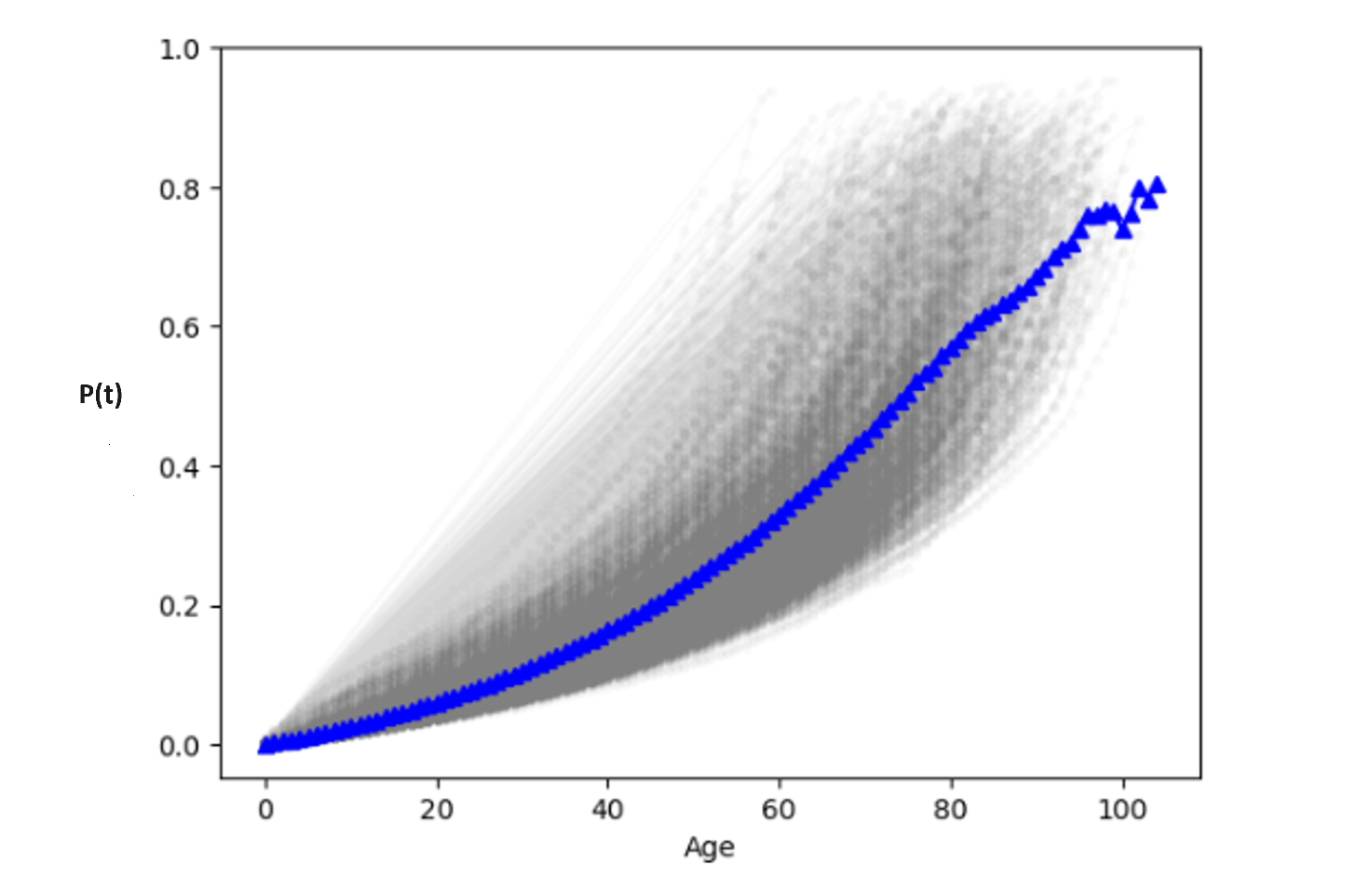}
\includegraphics[width=0.45\textwidth]{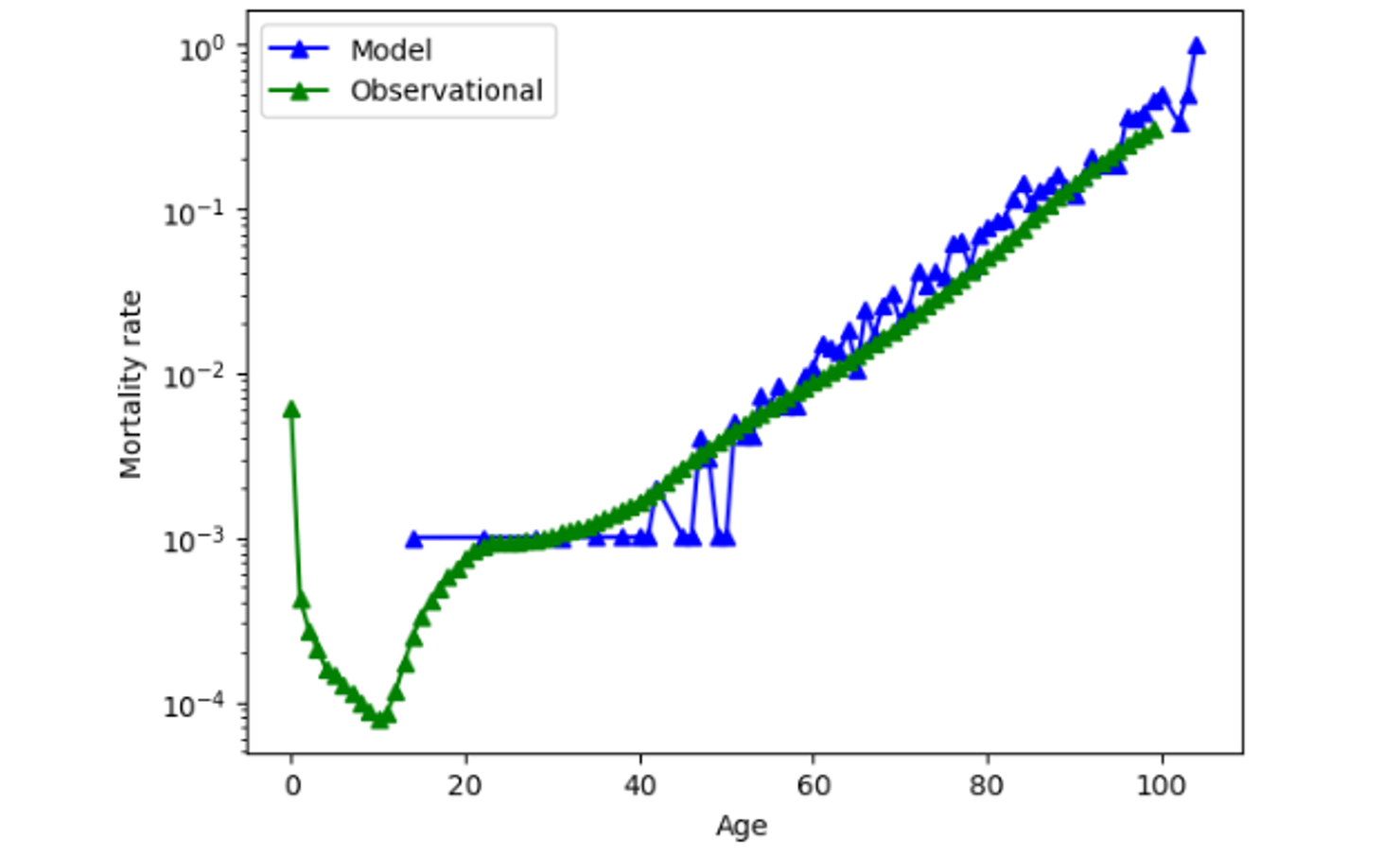}
\end{center}
\caption{\small A simulation of the damage fraction and the mortality rate for the parameter choices in \cite{MRFR2017}. \emph{Left:} The black curves are realisations  of the damage fraction, the blue curve is the average damage fraction. \emph{Right:} The blue curve is the result of simulation of the network dynamics, the green curve represents the statistical data obtained from Figure~\ref{fig:USmen}.}
\label{fig:numplot}
\end{figure}

Figure~\ref{fig:reference-s} shows plots of $p^*(s)$, $\chi^*(s)$ and $\ln[A^*_+(p^*(s))\chi^*(s)]$ based on \eqref{eq:m*def}--\eqref{eq:scaldiff} for the values of $r_+,r_-,R$ used in \cite{MRFR2017}. The latter curve is roughly linear on the interval $s \in [0.02, 0.08]$, but bends down on the left of this interval and bends up on the right.

\begin{figure}[htbp]
\vspace{0.4cm}
\centering
\includegraphics[width=0.4\textwidth]{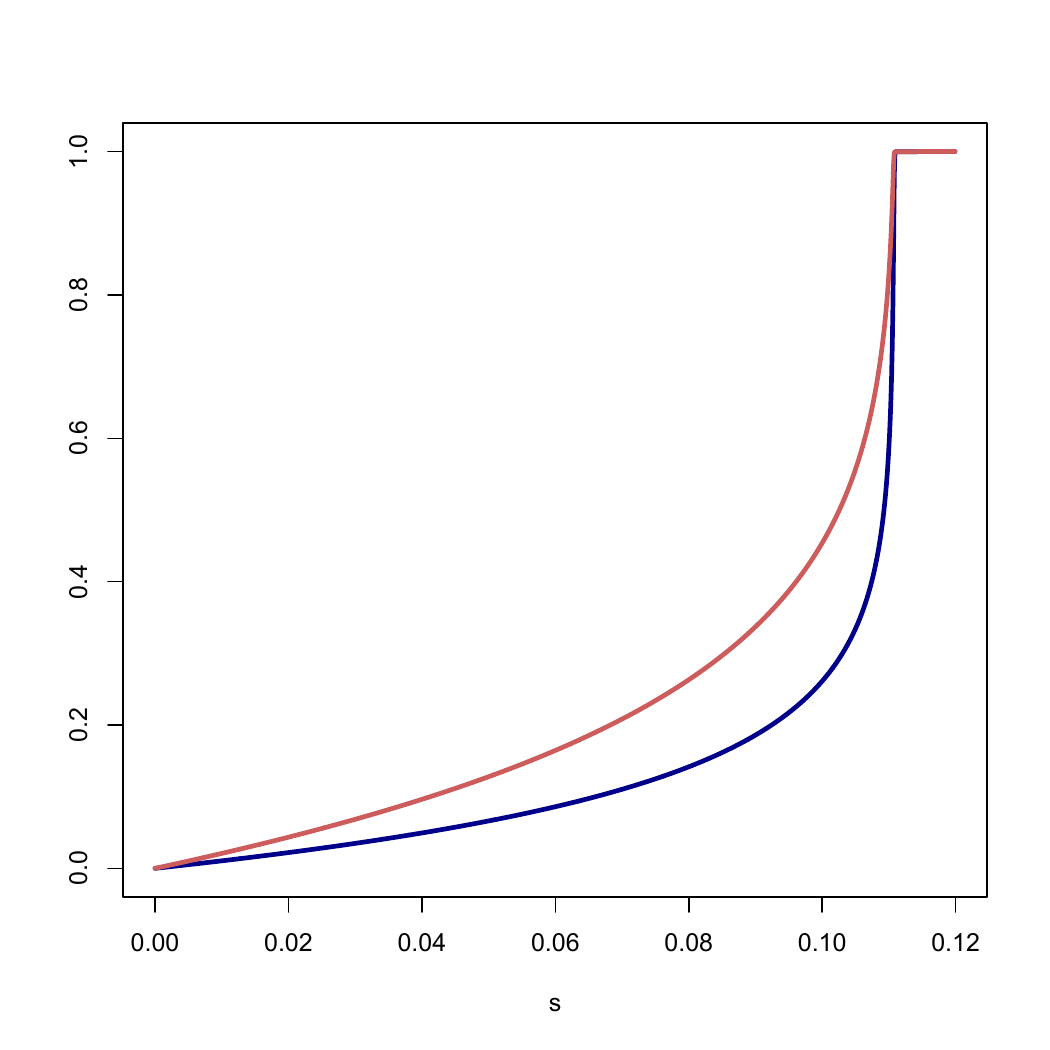} \hspace{1cm}
\includegraphics[width=0.4\textwidth]{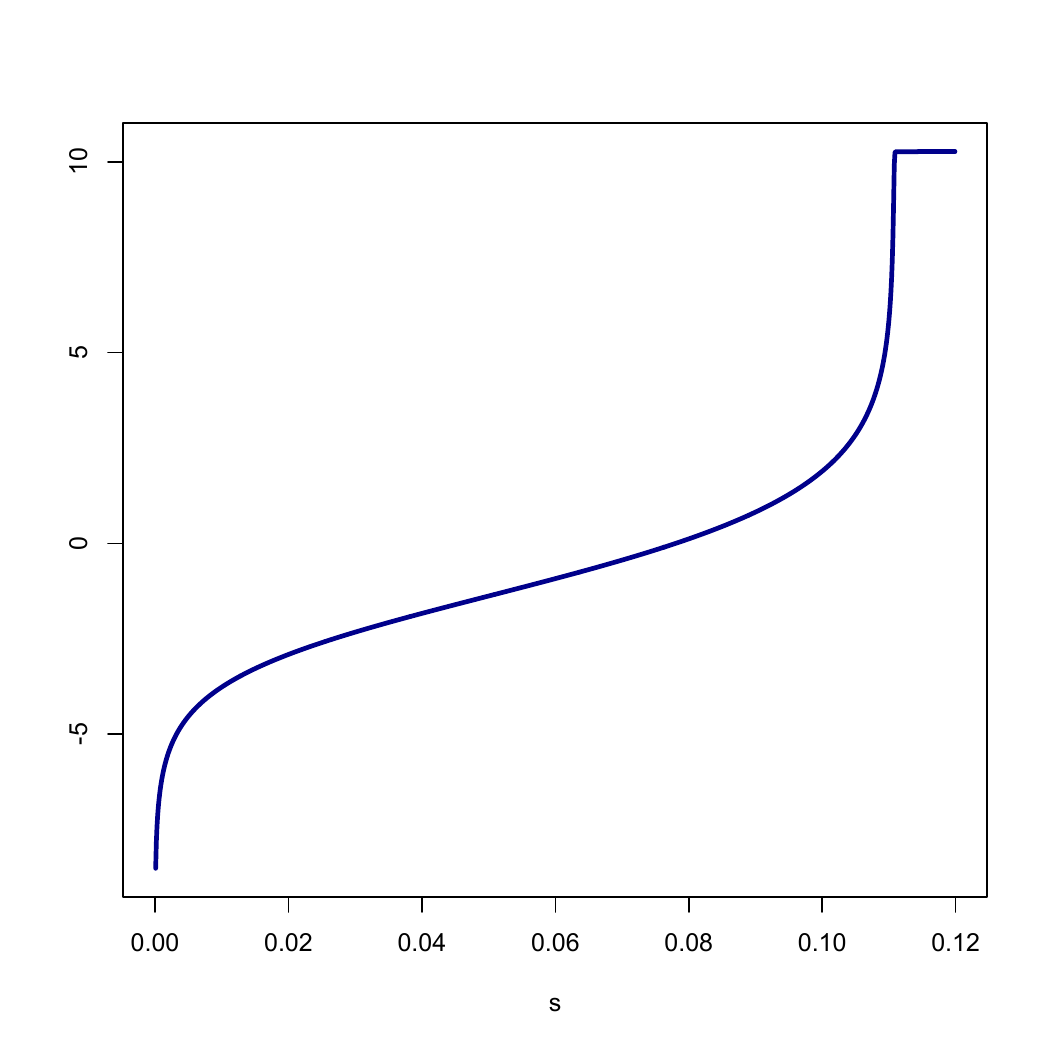}
\caption{\small Plots for $r_+ = 10.27$, $r_- = 6.5$, $R = 1.5$. \emph{Left:} Plots of $p^*(s)$ (= blue curve) and $\chi^*(s)$ (= red curve). \emph{Right:} Plot of $\ln[A^*_+(p^*(s))\chi^*(s)]$.}
\label{fig:reference-s}
\end{figure}

Figure~\ref{fig:reference-t} shows a plot of $\ln m(t)$ for $t \in [40,80]$ years based on \eqref{eq:micmac} and \eqref{eq:m*def}--\eqref{eq:scaldiff} for the values of $r_+,r_-,R,\Gamma_0$ used in \cite{MRFR2017}. The best fitted line as the linear regression fit is $\ln m(t) \approx -10.82 + 0.059\,t$, which gives
\[
\alpha \approx \eee^{-10.82} \approx 2.0 \times 10^{-5}, \qquad \beta \approx 5.9 \times 10^{-2}.
\]
The value of $\alpha$ is about $34\%$ of the value in Figure~\ref{fig:USmen}, the value of $\beta$ is about $69\%$ times the value in Figure~\ref{fig:USmen}. Thus, the match is fairly good.

\begin{Remark}
{\rm The linear regression fit depends on the $t$-interval that is chosen. For instance, Table \ref{table1} shows the results obtained by fitting linear regression lines to the data points on the curve $\ln m(t)$ at different time ranges.

\begin{center}
\begin{table}[htbp]
\centering
\caption{Properties of best fitted lines.}
\label{table1}
\renewcommand{\arraystretch}{1.6}
\begin{tabular}{cccccccccccc}
\toprule
\boldmath{\textbf{$t$-interval}} &&&& \boldmath{\textbf{$\alpha$}} && \boldmath{\textbf{$\beta$}} && \textbf{R-squared} && \textbf{MSE} \\
\midrule
$[40,80]$ years   &&&& $\approx 2.0 \times 10^{-5}$ && $\approx 5.9 \times 10^{-2}$ && 0.99 && 0.00 \\
$[40,85]$ years   &&&& $\approx 1.6 \times 10^{-5}$ && $\approx 6.3 \times 10^{-2}$ && 0.99 && 0.01 \\
$[40,90]$ years   &&&& $\approx 1.2 \times 10^{-5}$ && $\approx 6.8 \times 10^{-2}$ && 0.98 && 0.02 \\
$[40,95]$ years   &&&& $\approx 7.1 \times 10^{-6}$ && $\approx 7.8 \times 10^{-2}$ && 0.94 && 0.10 \\
$[40,100]$ years  &&&& $\approx 8.3 \times 10^{-7}$ && $\approx 1.1 \times 10^{-1}$ && 0.67 && 1.89 \\
\bottomrule
\end{tabular}
\end{table}
\end{center}


As observed in Table \ref{table1}, choosing a longer time interval decreases the value of $\alpha$ and increases the value of $\beta$. However, the decrease in $R$-squared and the increase in the mean-squared error MSE indicate that the fitted line is diverging from the curve. The reason is that the curve in Figure \ref{fig:reference-t} bends up. Thus, Theorem~\ref{th:Glaw} only yields a somewhat \emph{crude form} of the Gompertz law.}
\end{Remark}

\begin{Remark}
{\rm A numerical estimation of $\mathbb{E}[\hat{p}(\tau)]$ in \eqref{eq:frdeath} based on Monte-Carlo integration produces a value between $0.66$ and $0.70$ for the parameter values used in \cite{MRFR2017}. Hence, roughly and on average, a fraction $\frac23$ of the nodes is damaged when death occurs.}
\end{Remark}

\begin{figure}[htbp]
\centering
\includegraphics[width=0.425\textwidth]{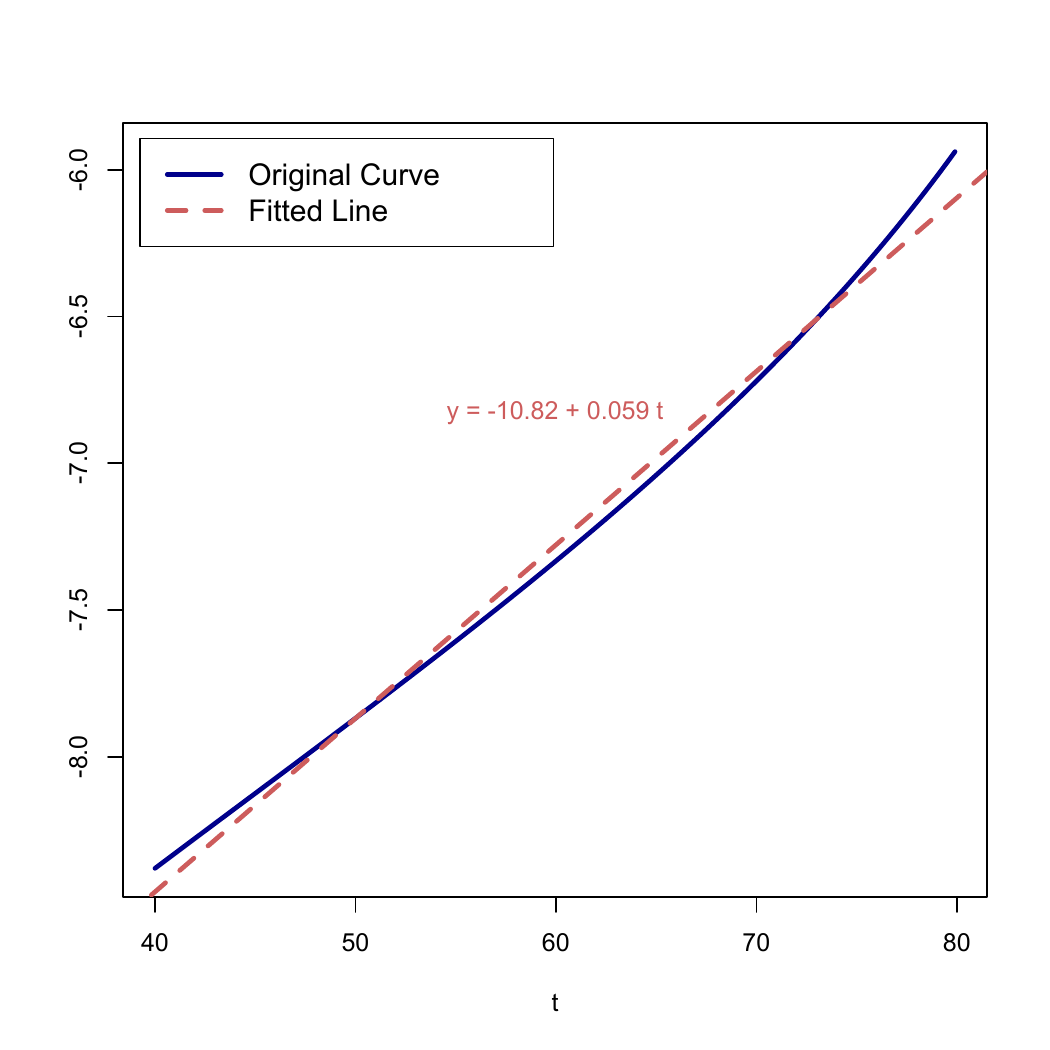}
\caption{\small Plot of $\ln m(t)$ for $t \in [40,80]$ years, for $r_+ = 10.27$, $r_- = 6.5$, $R = 1.5$, $\Gamma_0 = 0.00113$/year (= blue curve). The best fitted line (= dashed red line) corresponds to the Gompertz law.}
\label{fig:reference-t}
\end{figure}


\section{An analytic approximation}

The first differential equation in \eqref{eq:scaldiff} is non-linear and does \emph{not} admit a closed form solution for every choice of the parameters. The second differential equation in \eqref{eq:scaldiff} is linear given the solution of the first and can therefore be solved explicitly:
\[
\chi^*(s) = \int_0^s \ddd u\,2A^*_+(p^*(u))\,\exp\left[-\int_u^s \ddd v\,[2A^*_+(p^*(v)) + A^*_-(p^*(v))]\right].
\]
Both differential equations are not hard to solve \emph{numerically}, as shown in Section~\ref{sec:simulations}, but it is natural to ask whether it is possible to find an \emph{analytic approximation} of $p^*(s)$, $\chi^*(s)$ and $\ln[A^*_+(p^*(s))\,\chi^*(s)]$. Below we give an affirmative answer. Any formula that expresses these quantities directly in terms of the parameters $r_+,r_-,R$, even in an approximate form, is helpful for a better understanding of the evolution of the damage fraction and the mortality rate.

The differential equation for $p^*(s)$ in \eqref{eq:scaldiff} can be integrated to give
\[
s = \int_0^{p^*(s)} \frac{\ddd\lambda}{(1-\lambda)\,A^*_+(\lambda) - \lambda\,A^*_-(\lambda)},
\]
where we use that $p^*(0)=0$, noting that the integrand is integrable near $\lambda = 0$. In order to obtain a formula for $p^*(s)$ as a function of $s$, the expression needs to be inverted. We proceed by deriving upper and lower bounds.

\medskip\noindent
\underline{Lower bound:} Estimate
\[
\begin{aligned}
s &= \int_0^{p^*(s)} \frac{\ddd\lambda\,\eee^{-r_+\lambda}}{(1-\lambda) - \frac{\lambda}{R}\,\eee^{-(r_+ + r_-)\lambda}}
\leq \int_0^{p^*(s)} \frac{\ddd\lambda\,\eee^{-r_+\lambda}}{(1-\lambda) - \frac{\lambda}{R}}\\
&\leq \frac{1}{1-\frac{R+1}{R}\,p^*(s)} \int_0^{p^*(s)} \ddd\lambda\,\eee^{-r_+\lambda}
=  \frac{1}{1-\frac{R+1}{R}\,p^*(s)}\,\frac{1}{r_+}\left[1-\eee^{-r_+ p^*(s)}\right]
\leq \frac{p^*(s)}{1-\frac{R+1}{R}\,p^*(s)}.
\end{aligned}
\]
Inverting this inequality, we get
\[
p^*(s) \geq \frac{s}{1+\frac{R+1}{R}\,s}.
\]
Since $\chi^*(s) \geq p^*(s)$ by Lemma~\ref{lem:sandwich}, it follows from \eqref{eq:m*def} that
\[
m^*(s) \gtrapprox \frac{\Gamma_0\,s}{1+\frac{R+1}{R}\,s}
\exp\left[\frac{r_+ s}{1+\frac{R+1}{R}\,s}\right],
\]
which via \eqref{eq:micmac} becomes
\begin{equation}
\label{eq:lb}
m(t) \gtrapprox \frac{\Gamma_0^2\,t}{1+\frac{R+1}{R}\,\Gamma_0\,t}
\exp\left[\frac{r_+ \Gamma_0\,t}{1+\frac{R+1}{R}\,\Gamma_0\,t}\right],
\end{equation}
In case $\frac{R+1}{R}\,\Gamma_0\,t \ll 1$, the right-hand side simplifies to $\Gamma_0^2\,t\,\eee^{r_+\Gamma_0\,t}$.

\medskip\noindent
\underline{Upper bound:} Estimate
\[
s \geq \int_0^{p^*(s)} \ddd\lambda\,\eee^{-r_+\lambda} = \frac{1}{r_+}\left[1-\eee^{-r_+ p^*(s)}\right].
\]
Inverting this inequality for $s$ that $r_+ s<1$, we get
\[
p^*(s) \leq - \frac{1}{r_+} \ln(1-r_+ s) = \frac{1}{r_+} \sum_{k=1}^\infty \frac{(r_+ s)^k}{k}.
\]
Since $\chi^*(s) \leq 2 p^*(s)$ by Lemma~\ref{lem:sandwich}, it follows from \eqref{eq:m*def} that
\[
m^*(s) \lessapprox \frac{2\Gamma_0}{r_+} \left[\sum_{k=1}^\infty \frac{(r_+ s)^k}{k}\right]
\exp\left[\sum_{k=1}^\infty \frac{(r_+ s)^k}{k}\right],
\]
which via \eqref{eq:micmac} becomes
\begin{equation}
\label{eq:ub}
m(t) \lessapprox \frac{2\Gamma_0}{r_+} \left[\sum_{k=1}^\infty \frac{(r_+ \Gamma_0\,t)^k}{k}\right]
\exp\left[\sum_{k=1}^\infty \frac{(r_+ \Gamma_0\,t)^k}{k}\right].
\end{equation}
In case $r_+\Gamma_0\,t \ll 1$, the right-hand side simplifies to $2\Gamma_0^2\,t\,\eee^{r_+\Gamma_0\,t}$.

The bounds in \eqref{eq:lb}--\eqref{eq:ub} show that, as long as both $\frac{R+1}{R}\Gamma_0\,t \ll 1$ and $r_+ \Gamma_0\,t \ll 1$, a fair approximation for $\beta$ in the Gompertz law is
\begin{equation}
\label{eq:betaapprox}
\beta \approx r_+\Gamma_0,
\end{equation}
while $\alpha$ can be crudely sandwiched for $t\in[40,80]$ as
\begin{equation}
\label{eq:alphapprox}
40\,\Gamma_0^2 \lessapprox \alpha \lessapprox 160\,\Gamma_0^2.
\end{equation}
The bounds in \eqref{eq:alphapprox} are a factor $4$ apart.

\begin{figure}[htbp]
\centering
\includegraphics[width=0.425\textwidth]{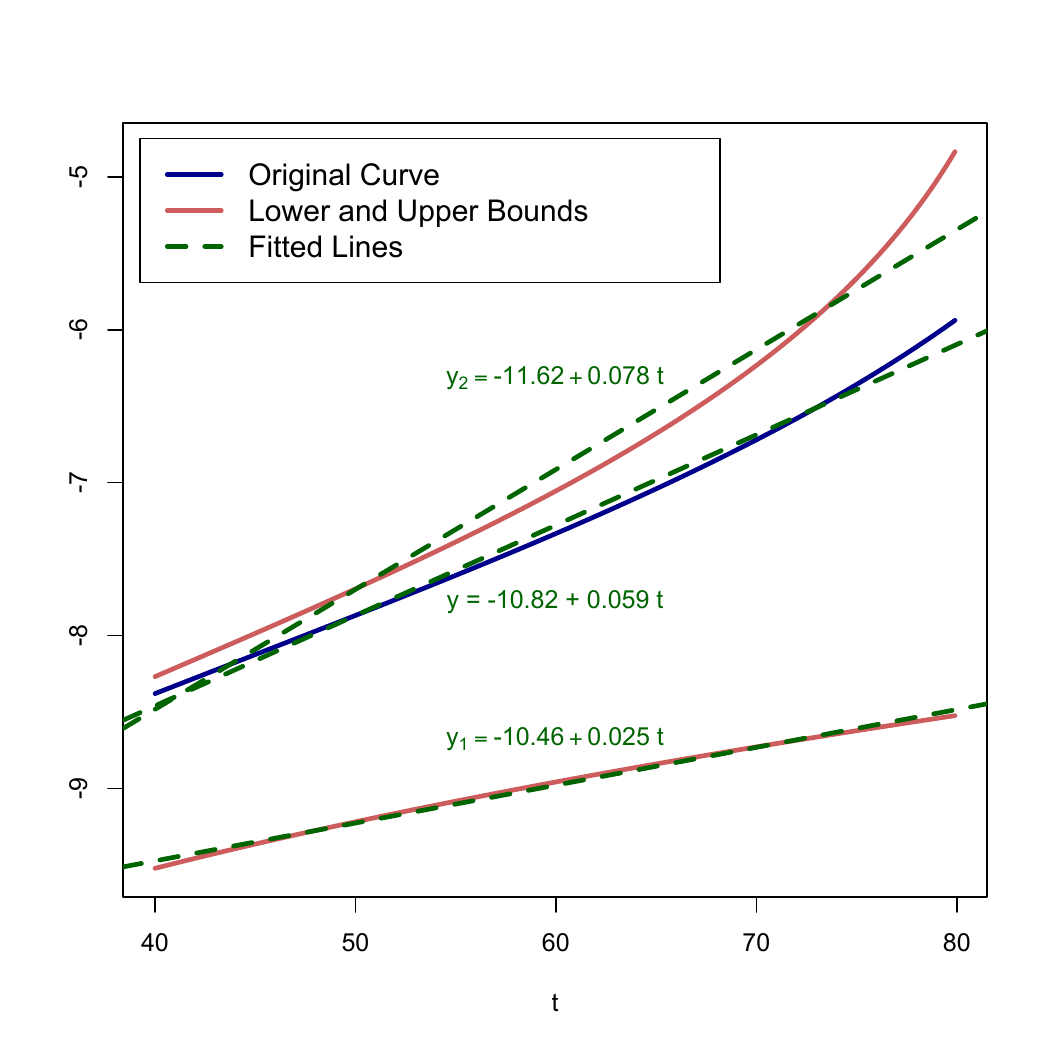}
\caption{\small Plot of $\ln m(t)$ for $t \in [40,80]$ years for the parameter values used in \cite{MRFR2017}. The middle curve (= blue curve) is the one in Figure~\ref{fig:reference-t}. The bottom curve and the top curve (= red curves) represent the numerical plot of the logarithm of the lower bound in \eqref{eq:lb} and the upper bound in \eqref{eq:ub}. The best fitted lines (= green dashed curves) correspond to the Gompertz law.}
\label{fig:reference-talt}
\end{figure}

For the parameter values used in \cite{MRFR2017}, Figure \ref{fig:reference-talt} compares the curve in Figure~\ref{fig:reference-t} with the bounds obtained in \eqref{eq:lb}--\eqref{eq:ub}. It seems that the upper bound in \eqref{eq:ub} is better than the lower bound in \eqref{eq:lb}. The slopes of the best fitted lines are $7.8 \times 10^{-2}$ for the upper bound, $2.5 \times 10^{-2}$ for the lower bound, and $5.9 \times 10^{-2}$ for the true curve. Thus, the values of $\beta$ obtained from the bounds are off by about 50\%, which is fair but not sharp. Interestingly, the values of $\ln \alpha$ in the three cases differ by not more than 8\%, which is rather sharp.

For the parameter values used in \cite{MRFR2017}, $\frac{R+1}{R}\Gamma_0\,t \approx 0.075$ and $r_+ \Gamma_0 t \approx 0.46$ at $t=40$ years. The former is small, the latter is not, which makes the approximations in \eqref{eq:betaapprox}--\eqref{eq:alphapprox} a priori questionable. The approximation in \eqref{eq:betaapprox} would predict $\beta \approx 1.2 \times 10^{-2}$, which is below the true value by a factor of about 5. The approximation in \eqref{eq:alphapprox} would predict $-9.88 \lessapprox \ln \alpha \lessapprox -8.50$, which interval lies above the true value. The lower bound of the interval is off by about 9\%, the upper bound by about 20\%. Again, the approximation of $\alpha$ is better than that of $\beta$.


\section{Conclusion}

The main contributions of the present paper are the following.
\begin{itemize}
\item[(1)]
We have provided \emph{mathematical} arguments to support the \emph{network} description of aging and mortality proposed in \cite{MRFR2017}.
\item[(2)]
We have argued that Poisson rates for the evolution of the network are \emph{universal} as long as the mechanism according to which nodes switch between healthy and damaged is ``the net result of many small influences''.
\item[(3)]
With the help of two assumptions, valid for \emph{scale-free} and \emph{disassortative} networks, we have shown that the \emph{Gompertz law} holds in a certain sense in a certain age range. The precise formula for the mortality rate \emph{deviates} somewhat from the Gompertz law. With the help of simulations we have shown that the approximations implied by these assumptions are fair. The size of the network plays no role after these approximations have been implemented, which again underlines the \emph{universality} of the network description.
\item[(4)]
We have linked the parameters in the Gompertz law to the parameters driving the evolution of the network. For the curves in Figures~\ref{fig:USmen} and \ref{fig:numplot} we have found a \emph{fair fit} with the model parameters used in \cite{MRFR2017}.
\item[(5)]
Our formula for the mortality rate involves the solution of two non-linear differential equations, which are easy to handle numerically but are hard to solve analytically. Exact bounds on the solutions of these differential equations allow us to derive a \emph{crude} analytic approximation of the mortality rate that is \emph{explicit} in the model parameters.
\end{itemize}

It is well known that the Gompertz law is not perfect. Yet, it is ubiquitous, simple and widely regarded as a good approximation of the \emph{empirical law} for the mortality rate as a function of age \cite{OC1997}. The Gompertz law is known to be relevant in many species, from yeast to fruit flies, from dogs to horses. For instance, mice accumulate health deficits just like humans \cite{RBTSFMH2017}. Thus, the network description of ageing via the accumulation of health deficits \emph{in principle} covers a wide range of species, with parameters varying across species.


\vskip 1 cm \noindent
\emph{Acknowledgement.}
FdH and AP were supported by the Netherlands Organisation for Scientific Research (NWO) through NETWORKS Gravitation Grant no.\ 024.002.003. AP was also supported by the European Union's Horizon 2020 research and innovation programme under the Marie Sk\l odowska-Curie grant agreement no.\ 101034253.

\vspace{0.3cm}\hspace{-0.3cm}
 \includegraphics[height=3em]{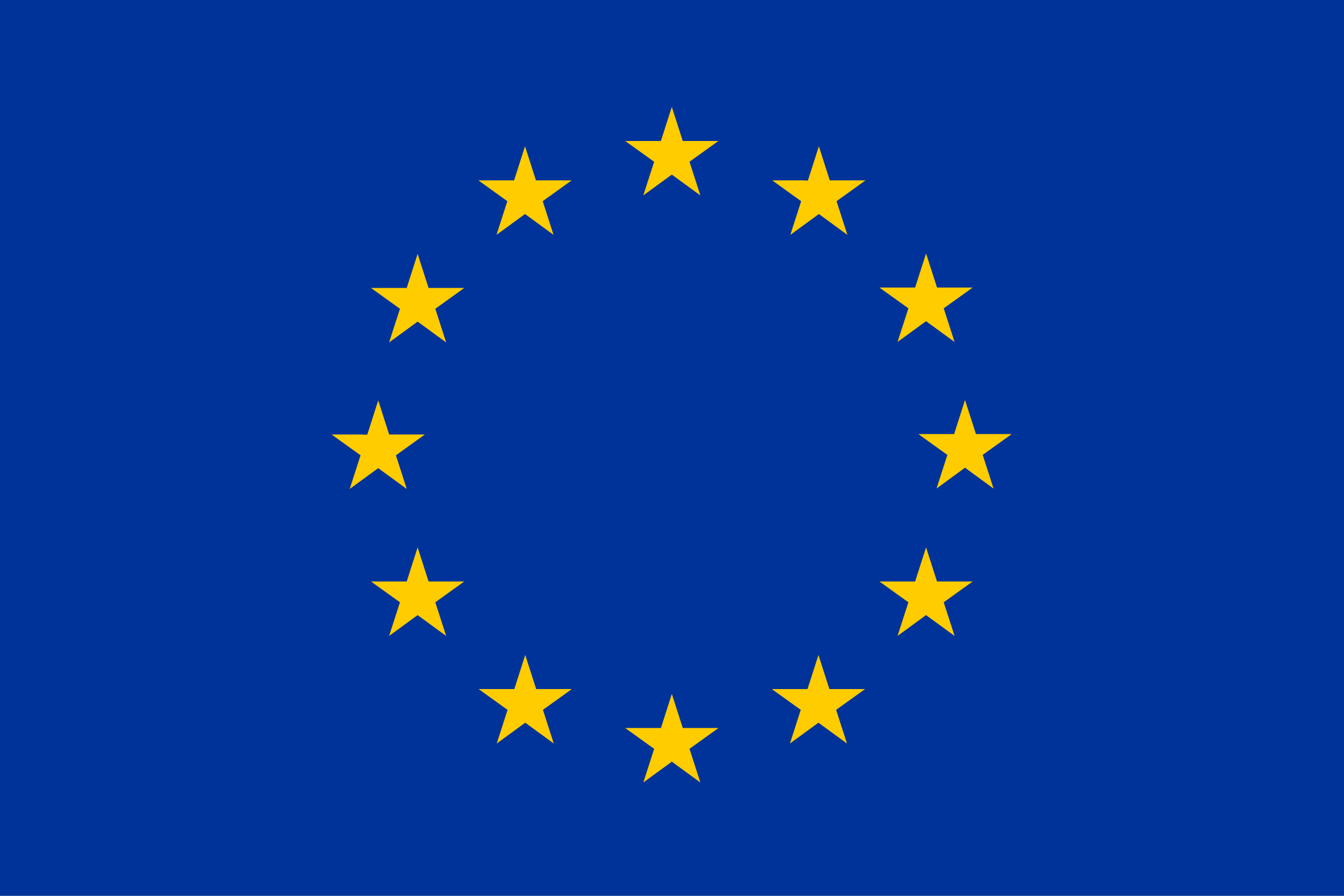}

\clearpage


\end{document}